\journal{Journal of \LaTeX\ Templates}
\newtheorem{theorem}{Theorem}
\newtheorem{remark}{Remark}
\newtheorem{lemma}{Lemma}
\newtheorem{corollary}{Corollary}
\newtheorem{example}{Example}
\newcommand{\C}{{\mathcal{C}}}
\newcommand{\F}{{\mathbb{F}}}
\begin{document}

\begin{frontmatter}

\title{New Quantum MDS codes from Hermitian self-orthogonal generalized Reed-Solomon codes}
\tnotetext[mytitlenote]{This research work is supported by the National Natural Science Foundation of China under Grant Nos. U21A20428 and 12171134.}

\author[mymainaddress]{Ruhao Wan}
\ead{wanruhao98@163.com}

\author[mymainaddress]{Shixin Zhu\corref{mycorrespondingauthor}}
\cortext[mycorrespondingauthor]{Corresponding author}
\ead{zhushixinmath@hfut.edu.cn}

\address[mymainaddress]{School of Mathematics, HeFei University of Technology, Hefei 230601, China}

\begin{abstract}
Quantum maximum-distance-separable (MDS for short) codes are an important class of quantum codes.
In this paper,
by using Hermitian self-orthogonal generalized Reed-Solomon (GRS for short) codes,
we construct five new classes of $q$-ary quantum MDS codes with minimum distance larger than $q/2+1$.
Furthermore, the parameters of our quantum MDS code cannot be obtained from the previous constructions.

\end{abstract}

\begin{keyword}
Hermitian self-orthogonal \sep Generalized Reed-Solomon codes\sep Quantum MDS codes
\end{keyword}

\end{frontmatter}

\section{Introduction}\label{sec1}

Quantum information and quantum computing have become hot topics in recent years.
Quantum codes have important implications for quantum communication and quantum computing.
As in classical coding theory,
a central theme of quantum error correction is the construction of quantum codes with good parameters.
In \cite{RefJ (1998) introduct 1}, Calderbank et al. established a fundamental to use additive codes over $\F_4$ to construct a class of quantum codes named stabilizer codes.
Then Rains \cite{RefJ (1999) introduct 2}, Ashikhmin and Knill \cite{RefJ (2001) introduct 3} generalized their results to general finite fields.
Since then, many quantum codes with good parameters have been constructed by classical linear codes
with certain self-orthogonality (see \cite{RefJ (2006) Keykar}-\cite{RefJ (2004) n<q+1(2)}).

Let $q$ be a prime power.
A $q$-ary quantum code $Q$ of length $n$ and size $K$ is a $K$-dimensional subspace of a $q^n$-dimensional Hilbert space
$\mathbb{H}=(\mathbb{C}^q)^{\bigotimes n}=\mathbb{C}^q\bigotimes \dots \bigotimes \mathbb{C}^q$.
As in classical coding theory, the third important parameter of a quantum code is its minimum distance besides its length and size.
If a quantum code has minimum distance $d$,
then it can detect any $d-1$ quantum errors and correct any $\lfloor \frac{d-1}{2}\rfloor$ quantum errors.
We use the notation $[[n,k,d]]_q$ to denote a quantum code of length $n$, dimension $q^k$ and minimum distance $d$.
One of the main problems of quantum coding theory is to construct quantum codes with the minimum distance as large as possible.
However, there are many tradeoffs between $n$, $k$ and $d$ for a quantum code.
One of the most well-known tradeoffs is the so-called $quantum\ Singleton\ bound$ (see \cite{RefJ (2001) introduct 3,RefJ (2006) Keykar}):
\begin{equation}\label{eq singleton bound}
2d\leq n-k+2.
\end{equation}
The quantum codes that reach the above bound (\ref{eq singleton bound}) are called quantum MDS codes.

\subsection{Related works}

In recent years, the construction of new quantum MDS codes has attracted many researchers, and relevant results have been gradually improved.
More precisely,
all $q$-ary quantum MDS codes of length $n\leq q+1$ have been constructed in \cite{RefJ (2004) n<q+1(1),RefJ (2004) n<q+1(2)}.
As described in \cite{RefJ (2010) L.jin,RefJ (2014) L.jin}, except for some sparse lengths, almost all known $q$-ary quantum MDS codes have a minimum distance less than or equal to $q/2+1$.
So it becomes increasingly challenging to construct some new quantum MDS codes with large minimum distances, especially those larger than $q/2+1$.
Researchers have made efforts to construct quantum MDS codes via negacyclic codes (see \cite{RefJ (2013) n=q^2+1(3)}),
constacyclic codes (see \cite{RefJ (2014) kai}-\cite{RefJ (2015) B.chen}) and pseudo-cyclic codes (see \cite{RefJ (2016) S.Li}).
As an important subclass of MDS codes, GRS codes are also widely used to construct quantum MDS codes.
Li et al. \cite{RefJ (2008) n=q^2+1(4)} first proposed a unified framework for constructing quantum MDS codes via GRS codes.
Then, Jin et al. \cite{RefJ (2010) L.jin,RefJ (2014) L.jin} generalized the method in \cite{RefJ (2008) n=q^2+1(4)}.
Later, using GRS codes, the researchers constructed many quantum MDS codes with minimum distance greater than $q/2+1$ (see \cite{RefJ (2016) X.He}-\cite{RefJ (2021) n=q^2+1(5)}).
In particular, recently, Ball \cite{RefJ (2021) n=q^2+1(5)} proved that the minimum distance of quantum codes derived from GRS codes is at most $q+1$.
We summarize some known results of quantum MDS codes in Table \ref{tab:1} of Section \ref{sec5}.

\subsection{Our results}

In this paper, we focus on the construction of quantum MDS codes via Hermitian self-orthogonal GRS codes.
The main idea of our constructions is to find suitable code locators $a_1,a_2,\dots,a_n\in \F_{q^2}$
and column multipliers $v_1,v_2,\dots,v_n\in \F_{q^2}^*$
such that a system of homogenous equations over $\F_{q^2}$ has a solution over $\F_q^*$ (see Lemma \ref{lem (a,v)=0}).
Firstly, we give some lemmas
(see Lemmas \ref{lem youjie 1}-\ref{lem zhenchu 3}).
Among them,
we mainly apply Lemmas \ref{lem youjie 1}, \ref{lem youjie xin} and \ref{lem youjie xin 2} to prove that a system of equations has solution over $\F_q^*$.
Using these lemmas,
we construct five new classes of Hermitian self-orthogonal GRS codes,
and further draw five classes of quantum MDS codes (see Theorems \ref{th1}, \ref{th2}, \ref{th3}, \ref{th4}, \ref{th5}).
It is noteworthy that the quantum MDS codes constructed in this paper have larger minimum distance than the previous results (see Remarks \ref{remrak 1}, \ref{remrak 2}, \ref{remrak 3}).
Further, the minimum distance of all the $q$-ary quantum MDS codes constructed in this paper can be larger than $q/2+1$.
We list our new constructions of quantum MDS codes in Table \ref{tab:3} of Section \ref{sec5}.

\subsection{Organization of this paper}

The rest of this paper is organized as follows.
In Section \ref{sec2}, we recall some basic results about Hermitian self-orthogonal GRS codes.
In Section \ref{sec3}, we introduce the main lemmas needed in this paper.
In Section \ref{sec4}, we construct five new classes of quantum MDS codes.
In Section \ref{sec5}, we compare the results of this paper with previous results.
In Section \ref{sec6}, we conclude this paper.

\section{Preliminaries}\label{sec2}

In this section, we recall some definitions and basic theories of Hermitian self-orthogonal GRS codes over $\F_{q^2}$
and restate some lemmas.

Let $q$ be a prime power and $\F_{q^2}$ be the finite field with $q^2$ element.
An $[n,k,d]_{q^2}$ linear code $\C$ over $\F_{q^2}$ is just a $k$-dimensional subspace of $\F_{q^2}^n$.
The minimum distance $d$ of $\C$
is equal to the the minimum nonzero Hamming weight of all codewords in $\C$.
For a vector $\bm{c}=(c_1,c_2,\dots,c_n)\in \F_{q^2}^n$,
we always denote a vector $(c_1^i,c_2^i,\dots,c_n^i)\in \F_{q^2}^n$ by $\bm{c}^i$.
Specially, $0^0=1$.
For any two vectors $\bm{x}=(x_1,x_2,\dots,x_n),\ \bm{y}=(y_1,y_2,\dots,y_n)\in \F_{q^2}^n$,
the Euclidean and Hermitian inner product of vectors $\bm{x}$, $\bm{y}$ are defined by
$$\langle \bm{x},\bm{y} \rangle_{E}=\sum_{i=1}^{n}x_iy_i \quad and\quad \langle \bm{x},\bm{y} \rangle_{H}=\sum_{i=1}^{n}x_iy_i^q,$$
respectively.
And the Euclidean and Hermitian dual codes of $\C$ are defined by
$$\C^{\perp_{E}}=\{\bm{x}\in \F_{q^2}^n:\langle \bm{x}, \bm{y}\rangle_{E}=0, \ for \ all\ \bm{y}\in \C \}$$
and
$$\C^{\perp_{H}}=\{\bm{x}\in \F_{q^2}^n:\langle \bm{x}, \bm{y}\rangle_{H}=0, \ for \ all\ \bm{y}\in \C \},$$
respectively.
If $\C\subseteq \C^{\perp_H}$, the code $\C$ is called Hermitian self-orthogonal.
Particularly, if $\C=\C^{\perp_{H}}$,
we call $\C$ a Hermitian self-dual code.
It is easy to check that $C^{\perp_H}=(C^{(q)})^{\perp_E}$, where $C^{(q)}=\{\bm{c}^q:\bm{c}\in C\}$.
For a matrix $A=(a_{ij})$ over $\F_{q^2}$,
we define the matrix $(a_{ij}^q)$ as $A^{(q)}$.

Now, we recall some definitions related to GRS codes.
Choose $\bm{a}=(a_{1},a_{2},\dots,a_{n})$
to be an $n$-tuple of distinct elements of $\F_{q^2}$.
Let $\bm{v}=(v_1,v_2,\dots,v_n)$ be a vector of $(\F_{q^2}^*)^n$, where $\F_{q^2}^*=\F_{q^2}\setminus \{0\}$.
For an integer $k$ satisfying $1\leq k\leq n$, the GRS code associated with $\bm{a}$ and $\bm{v}$ is defined by
$$GRS_{k}(\bm{a},\bm{v})=\{(v_{1}f(a_{1}),v_{2}f(a_{2}),\dots,v_{n}f(a_{n})): f(x)\in \F_{q^2}[x],\ \deg(f(x))\leq k-1\}.$$
It is well known that $GRS_{k}(\bm{a},\bm{v})$ is an $[n,k,n-k+1]_{q^2}$ MDS code.
And its dual code is also MDS.
The elements $a_1, a_2,\dots,a_n$ are called the $code\ locators$ of
$GRS_k(\bm{a},\bm{v})$,
and the elements $v_1,v_2,\dots,v_n$ are called the $column\ multipliers$ of $GRS_k(\bm{a},\bm{v})$.
Note that the vectors $(v_1a_1^j,v_2a_2^j,\dots,v_na_n^j),\ j=0,1,\dots,k-1$ form a basis of $GRS_k(\bm{a},\bm{v})$.
Hence, we can directly obtain the following lemma, which was also given in \cite{RefJ (2014) L.jin,RefJ (2017)Lem GRS}.

\begin{lemma}\label{lem (a,v)=0}
The two vectors $\bm{a}=(a_1,a_2,\dots,a_n)$ and $\bm{v}=(v_1,v_2,\dots,v_n)$
are defined as above.
Then $GRS_k(\bm{a},\bm{v})$ is Hermitian self-orthogonal
if and only if $\langle \bm{a}^{qi+j}, \bm{v}^{q+1} \rangle_E=\sum_{l=1}^nv_la_l^j(v_la_l^i)^q=0$, for all $0\leq i,j \leq k-1$.
\end{lemma}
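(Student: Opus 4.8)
The plan is to reduce the Hermitian self-orthogonality of $\C := GRS_k(\bm{a},\bm{v})$, which a priori is a condition on all pairs of codewords, to a finite check on a fixed basis, and then to evaluate that check explicitly.

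First I would recall that $\C$ is Hermitian self-orthogonal exactly when $\langle \bm{x}, \bm{y}\rangle_H = 0$ for all $\bm{x}, \bm{y} \in \C$. Put $\bm{g}_j := (v_1 a_1^j, v_2 a_2^j, \dots, v_n a_n^j)$ for $j = 0, 1, \dots, k-1$; as noted just before the statement, these vectors form a basis of $\C$. The key reduction is that it suffices to verify $\langle \bm{g}_j, \bm{g}_i\rangle_H = 0$ for all $0 \le i, j \le k-1$. Indeed, writing arbitrary codewords as $\bm{x} = \sum_{j=0}^{k-1}\lambda_j \bm{g}_j$ and $\bm{y} = \sum_{i=0}^{k-1}\mu_i \bm{g}_i$ with $\lambda_j, \mu_i \in \F_{q^2}$, and using that $\langle\cdot,\cdot\rangle_H$ is $\F_{q^2}$-linear in its first argument while $z\mapsto z^q$ is additive on $\F_{q^2}$, one obtains
\[
\langle \bm{x}, \bm{y}\rangle_H = \sum_{i,j} \lambda_j\, \mu_i^q\, \langle \bm{g}_j, \bm{g}_i\rangle_H ,
\]
so vanishing on all basis pairs forces $\langle \bm{x},\bm{y}\rangle_H = 0$; the converse implication is immediate since each $\bm{g}_i$ lies in $\C$.

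Then I would carry out the explicit computation. For fixed $i,j$,
\[
\langle \bm{g}_j, \bm{g}_i\rangle_H = \sum_{l=1}^n (v_l a_l^j)(v_l a_l^i)^q = \sum_{l=1}^n v_l a_l^j\, v_l^q a_l^{qi} = \sum_{l=1}^n v_l^{q+1} a_l^{qi+j},
\]
which is precisely $\langle \bm{a}^{qi+j}, \bm{v}^{q+1}\rangle_E$ in the notation of the statement; here the convention $0^0 = 1$ takes care of the terms in which a code locator $a_l$ equals $0$ and the relevant exponent vanishes. Combining this identity with the reduction step yields the claimed equivalence. I do not anticipate any serious obstacle: the only step that needs genuine care — and it is still routine — is the semilinearity bookkeeping that justifies passing from basis pairs to arbitrary codewords (together with the $0^0=1$ convention in the degenerate terms).
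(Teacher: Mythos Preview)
Your proof is correct and is exactly the approach the paper has in mind: it records that the vectors $\bm{g}_j=(v_1a_1^j,\dots,v_na_n^j)$ form a basis of $GRS_k(\bm{a},\bm{v})$ and then states that the lemma follows directly, without spelling out the sesquilinearity reduction or the computation you wrote. There is no discrepancy to discuss.
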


The following lemma is widely used to construct quantum MDS codes.

\begin{lemma}\label{Lem zhuyao}
(Hermitian construction for quantum MDS codes \cite{RefJ (2001) introduct 3})
If there exists a linear code $\C$ with parameters $[n,k,n-k+1]_{q^2}$ and satisfies $\C\subseteq \C^{\bot_H}$,
then there exists a quantum code with parameters $[[n,n-2k,k+1]]_{q}$.
\end{lemma}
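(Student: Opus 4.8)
The plan is to deduce the statement from the general Hermitian construction of quantum stabilizer codes and then to read off the three parameters from the MDS property of $\C$ and of its Hermitian dual. First I would invoke the identity $\C^{\perp_H}=(\C^{(q)})^{\perp_E}$ recorded above: it rewrites the hypothesis $\C\subseteq\C^{\perp_H}$ in terms of the Euclidean form, which is the shape in which the trace-alternating (symplectic) form over $\F_q$ underlying the stabilizer formalism can digest it. The general construction of Ashikhmin and Knill~\cite{RefJ (2001) introduct 3} then produces, from any $\F_{q^2}$-linear code $\C$ of length $n$ and dimension $k$ with $\C\subseteq\C^{\perp_H}$, a $q$-ary quantum code of length $n$, dimension $q^{n-2k}$, and minimum distance at least $\mathrm{wt}(\C^{\perp_H}\setminus\C)$, the smallest Hamming weight of a vector lying in $\C^{\perp_H}$ but not in $\C$. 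This is the only non-elementary input; in the proposal it is cited rather than reproved, and everything else is bookkeeping.

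Next I would compute the dimension. Since $x\mapsto x^q$ is a field automorphism of $\F_{q^2}$, the code $\C^{(q)}$ has the same length, dimension and weight distribution as $\C$, hence is an $[n,k,n-k+1]_{q^2}$ MDS code; therefore $\C^{\perp_H}=(\C^{(q)})^{\perp_E}$, being the Euclidean dual of an MDS code, is an $[n,n-k,k+1]_{q^2}$ MDS code. It follows that the quantum code has dimension $q^{(n-k)-k}=q^{n-2k}$, which is the claimed middle parameter. (The case $n=2k$, in which $\C=\C^{\perp_H}$ is Hermitian self-dual and the quantum code is degenerate, is excluded; from here on $n>2k$.)

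Then I would pin down the minimum distance. Because $\C^{\perp_H}$ is MDS of dimension $n-k$, it contains codewords of weight exactly $k+1$ and no nonzero codeword of smaller weight. Because $\C$ is MDS of dimension $k$, every nonzero codeword of $\C$ has weight at least $n-k+1$, and $n>2k$ forces $n-k+1>k+1$; hence no weight-$(k+1)$ codeword of $\C^{\perp_H}$ lies in $\C$. Consequently $\mathrm{wt}(\C^{\perp_H}\setminus\C)=k+1$, so the construction yields an $[[n,n-2k,d]]_q$ code with $d\geq k+1$. On the other hand the quantum Singleton bound~(\ref{eq singleton bound}) gives $2d\leq n-(n-2k)+2=2k+2$, i.e. $d\leq k+1$. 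Combining the two inequalities, $d=k+1$, so the code has parameters $[[n,n-2k,k+1]]_q$ and, meeting~(\ref{eq singleton bound}) with equality, it is a quantum MDS code.

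The only genuine obstacle is the black box invoked in the first step, namely the reduction---via the trace-alternating form---from a Hermitian self-orthogonal $\F_{q^2}$-linear code to a stabilizer code over $\F_q$, together with the lower bound on the stabilizer code's minimum distance. Once that is granted, the remainder is a short computation with dual dimensions and the defining inequality of MDS codes.
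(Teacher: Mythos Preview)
Your argument is correct. The paper itself does not supply a proof of this lemma at all: it is stated as a citation to Ashikhmin and Knill~\cite{RefJ (2001) introduct 3} and used as a black box throughout. So there is nothing to compare approaches against; what you have written is simply a fuller justification than the paper offers, unpacking the cited result and checking that the MDS hypothesis forces the minimum distance to be exactly $k+1$ via the quantum Singleton bound. One small remark: your parenthetical exclusion of the self-dual case $n=2k$ is harmless for the applications in the paper (where $k$ is always strictly below $n/2$), but the lemma as stated does not formally exclude it, so you may want to phrase that edge case as a convention rather than an exclusion.
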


By the next lemma, we can get a new quantum MDS code from a known quantum code.

\begin{lemma}\label{lem propagation rule}
(Propagation Rule \cite{RefJ (2015) n=q^2+1(2)})
If there exists a quantum MDS code with parameters $[[n,n-2k,k+1]]_q$,
then there exists a quantum MDS code with parameters $[[n-1,n-2k+1,k]]_{q}$.
\end{lemma}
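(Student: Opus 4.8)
The plan is to derive the shorter code by \emph{shortening} the underlying classical code, rather than by manipulating quantum states directly. The hypothesis codes of interest arise, via Lemma~\ref{Lem zhuyao}, from an MDS code $\C\subseteq\F_{q^2}^n$ with parameters $[n,k,n-k+1]_{q^2}$ satisfying $\C\subseteq\C^{\perp_H}$; I would show that such a $\C$ yields a Hermitian self-orthogonal $[n-1,k-1,n-k+1]_{q^2}$ MDS code, and then invoke Lemma~\ref{Lem zhuyao} a second time.

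First I would delete one coordinate, say the last. Set $\C_0=\{\bm{c}\in\C:c_n=0\}$ and let $\C'\subseteq\F_{q^2}^{n-1}$ be the code obtained from $\C_0$ by erasing the last entry. Since $\C$ is MDS with $k\ge 1$, no coordinate vanishes identically on $\C$ (if one did, deleting it would give an $[n-1,k]$ code still of minimum distance $n-k+1$, violating the Singleton bound), hence $\dim\C_0=k-1$, and the erasure is injective on $\C_0$, so $\dim\C'=k-1$. A nonzero word of $\C_0$ is a nonzero word of $\C$ and so has weight at least $n-k+1$, a weight unchanged by erasing the zero coordinate; together with the Singleton bound for an $[n-1,k-1]_{q^2}$ code this forces $d(\C')=n-k+1$, so $\C'$ is MDS.

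Next I would check $\C'\subseteq(\C')^{\perp_H}$. For $\bm{x},\bm{y}\in\C_0$ with erasures $\bm{x}',\bm{y}'\in\C'$, one has $\langle\bm{x}',\bm{y}'\rangle_H=\sum_{i=1}^{n-1}x_iy_i^q=\sum_{i=1}^{n}x_iy_i^q=\langle\bm{x},\bm{y}\rangle_H=0$, using $c_n=0$ and $\C\subseteq\C^{\perp_H}$; since every element of $\C'$ lifts to an element of $\C_0$, the code $\C'$ is Hermitian self-orthogonal. Applying Lemma~\ref{Lem zhuyao} to $\C'$ then yields a quantum code with parameters $[[n-1,(n-1)-2(k-1),(k-1)+1]]_q=[[n-1,n-2k+1,k]]_q$; since $2k=(n-1)-(n-2k+1)+2$, it meets the quantum Singleton bound~(\ref{eq singleton bound}) and is therefore a quantum MDS code.

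The points that need care, rather than any deep obstacle, are the following. The argument above treats codes presented in Hermitian self-orthogonal form, which covers all applications in this paper; for a general $[[n,n-2k,k+1]]_q$ quantum MDS code the same conclusion is the standard quantum propagation rule and may be quoted directly from \cite{RefJ (2015) n=q^2+1(2)}. Moreover, in the degenerate case $k=1$ one has $\C'=\{\bm{0}\}$, so ``MDS'' must be read for the zero code in the usual convention; the resulting $[[n-1,n-1,1]]_q$ code is the trivial code and exists, so the statement holds here as well. The real content is thus the verification that shortening simultaneously preserves the MDS property and Hermitian self-orthogonality.
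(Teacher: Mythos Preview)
Your argument is correct. Shortening a Hermitian self-orthogonal $[n,k,n-k+1]_{q^2}$ MDS code at any coordinate does produce a Hermitian self-orthogonal $[n-1,k-1,n-k+1]_{q^2}$ MDS code, and Lemma~\ref{Lem zhuyao} then gives the desired $[[n-1,n-2k+1,k]]_q$ quantum MDS code. The steps you flag (no coordinate of an MDS code is identically zero, so $\dim\C_0=k-1$; the Singleton bound pins the minimum distance; the Hermitian inner product is unchanged when a zero coordinate is dropped) are all sound.

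As for comparison: the paper does not supply a proof of this lemma at all. It is quoted as a known propagation rule from \cite{RefJ (2015) n=q^2+1(2)} and used only once, after Theorem~\ref{th1}, where the input code indeed arises from a Hermitian self-orthogonal GRS code. So your restricted version already covers every use made of the lemma here. Your caveat is also well placed: the lemma as stated applies to an arbitrary $[[n,n-2k,k+1]]_q$ quantum MDS code, and in that generality one really does need the quantum shortening/puncturing rules of \cite{RefJ (2015) n=q^2+1(2)} rather than a classical argument; but for the present paper this distinction is immaterial.
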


\section{Main Lemmas}\label{sec3}

For our construction, we still need some other lemmas. In this section, we list them below.

\begin{lemma}(\cite{RefJ (2019)W. some})\label{lem youjie 1}
Suppose that $A$ is an $(r-1)\times r$ matrix over $\F_{q^2}$, where $r>0$.
If $A$ satisfies the following conditions:
\begin{itemize}
\item[(1)] any $r-1$ columns of $A$ are linearly independent;
\item[(2)]$A^{(q)}$ is row equivalent to $A$,
\end{itemize}
then the following equation
$$A\bm{u}^T=\bm{0}^T$$
has a solution $\bm{u}=(u_1,u_2,\dots,u_{r})\in (\F_q^*)^{r}$.
\end{lemma}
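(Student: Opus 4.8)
The plan is to analyze the homogeneous system $A\bm{u}^T = \bm{0}^T$ in two stages: first establish that the solution space is exactly one-dimensional over $\F_{q^2}$, then use the symmetry hypothesis (2) to force the existence of a generator of that line whose entries all lie in $\F_q^*$. Since $A$ is $(r-1)\times r$, its rank is at most $r-1$; condition (1), which says some (equivalently, any) $r-1$ columns are linearly independent, shows the rank is exactly $r-1$, so the null space of $A$ has dimension $1$. Write $\bm{w}=(w_1,\dots,w_r)$ for a nonzero solution. Then condition (1) also forces every coordinate $w_i$ to be nonzero: if $w_i=0$, then deleting the $i$-th column leaves an $(r-1)\times(r-1)$ submatrix annihilating the nonzero vector obtained from $\bm{w}$ by dropping $w_i$, contradicting the linear independence of those $r-1$ columns. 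So any generator $\bm{w}$ of the null space already lies in $(\F_{q^2}^*)^r$; the task is to rescale it into $(\F_q^*)^r$.

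Next I would bring in hypothesis (2): $A^{(q)}$ is row equivalent to $A$, i.e. there is an invertible $P$ over $\F_{q^2}$ with $A^{(q)} = PA$. Consequently $A$ and $A^{(q)}$ have the same null space. Applying the entrywise $q$-th power (Frobenius) map to $A\bm{w}^T=\bm{0}^T$ gives $A^{(q)}(\bm{w}^{(q)})^T = \bm{0}^T$, hence $\bm{w}^{(q)}$ also lies in the null space of $A^{(q)}$, which equals the null space of $A$. Since that space is one-dimensional and spanned by $\bm{w}$, there is a scalar $\lambda\in\F_{q^2}^*$ with $\bm{w}^{(q)} = \lambda\bm{w}$, that is $w_i^q = \lambda w_i$ for every $i$. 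Applying Frobenius once more yields $w_i^{q^2} = \lambda^q w_i^q = \lambda^{q+1} w_i$; but $w_i^{q^2}=w_i$ since $w_i\in\F_{q^2}$ and $w_i\neq 0$, so $\lambda^{q+1}=1$, i.e. $\lambda$ has norm $1$ from $\F_{q^2}$ to $\F_q$. By Hilbert 90 (or directly, since the norm map $\F_{q^2}^*\to\F_q^*$ is surjective with kernel of size $q+1$), we may write $\lambda = \mu^{q-1}$ for some $\mu\in\F_{q^2}^*$.

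Finally, set $\bm{u} = \mu^{-1}\bm{w}$; this is still a generator of the null space, so $A\bm{u}^T=\bm{0}^T$ and every $u_i$ is nonzero. It remains to check $u_i\in\F_q$: we compute $u_i^q = \mu^{-q} w_i^q = \mu^{-q}\lambda w_i = \mu^{-q}\mu^{q-1} w_i = \mu^{-1}w_i = u_i$, so indeed $u_i\in\F_q^*$ for all $i$, and $\bm{u}\in(\F_q^*)^r$ is the desired solution. I expect the only genuinely delicate point to be the passage from row equivalence of $A^{(q)}$ and $A$ to equality of their null spaces together with the one-dimensionality argument that produces the scalar $\lambda$ — everything after that is the standard Hilbert 90 normalization. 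One should double-check that hypothesis (1) is used in both places it is needed (to pin down the rank, and to guarantee no coordinate vanishes), since otherwise the conclusion $\bm{u}\in(\F_q^*)^r$ rather than merely $(\F_q)^r$ could fail.
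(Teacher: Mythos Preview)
Your proof is correct. Note, however, that the paper does not actually prove this lemma: it is quoted from \cite{RefJ (2019)W. some} and stated without proof, so there is no argument in the present paper to compare against directly.

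That said, it is worth contrasting your approach with the paper's proof of the closely related Lemma~\ref{lem youjie xin 2}. There the authors do not exploit one-dimensionality of the null space; instead, given a solution $\bm{c}\in(\F_{q^2}^*)^n$, they observe that $\bm{c}^q$ is also a solution (by row equivalence of $A$ and $A^{(q)}$) and then form a Frobenius-invariant linear combination $\bm{b}=w^j\bm{c}+w^{jq}\bm{c}^q$, choosing $j$ from among $q+1>n$ values so that no coordinate of $\bm{b}$ vanishes. Your argument, by contrast, uses the extra structure of Lemma~\ref{lem youjie 1} (the null space is a single line) to produce a scalar $\lambda$ with $\bm{w}^{(q)}=\lambda\bm{w}$ and then normalizes via Hilbert~90. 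Your route is cleaner when the null space is one-dimensional and avoids the pigeonhole/counting step; the paper's averaging trick is more flexible since it does not require the kernel to be a line, only that a nowhere-zero solution over $\F_{q^2}$ exists.
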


\begin{lemma}\label{lem youjie xin}
Suppose that $A$ is a $t\times n$ matrix over $\F_{q}$, where $1\leq t< n< q+1$.
Let $A_i$ be the $t\times (n-1)$ matrix obtained from $A$ by deleting the $i$-th column.
If $rank(A)=rank(A_1)=\dots=rank(A_n)$,
then the following equation
$$A\bm{u}^T=\bm{0}^T$$
has a solution $\bm{u}=(u_1,u_2,\dots,u_{n})\in (\F_q^*)^{n}$.
\end{lemma}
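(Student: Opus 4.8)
The plan is to reduce the statement to Lemma \ref{lem youjie 1} by passing to a suitable $(r-1) \times r$ submatrix, where $r = \mathrm{rank}(A)+1$. First I would set $r = \mathrm{rank}(A) + 1$, so $1 \le r \le t+1 \le n$; the hypothesis $\mathrm{rank}(A) = \mathrm{rank}(A_1) = \dots = \mathrm{rank}(A_n)$ says that deleting any single column does not drop the rank, i.e. \emph{every} column of $A$ lies in the column space of the remaining $n-1$ columns. Since $A$ has rank $r-1$, choose $r-1$ linearly independent rows of $A$ and let $B$ be the $(r-1) \times n$ submatrix they form; then $\mathrm{rank}(B) = r-1$, the row space of $B$ equals the row space of $A$, and — because the row operations that pass from $A$ to $B$ act identically on the columns indexed by any fixed set — deleting the $i$-th column of $B$ also preserves rank for each $i$. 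Crucially, the solution set of $A\bm{u}^T = \bm{0}^T$ equals the solution set of $B\bm{u}^T = \bm{0}^T$, since $B$'s rows span $A$'s row space. So it suffices to solve $B\bm{u}^T = \bm{0}^T$ with $\bm{u} \in (\F_q^*)^n$.

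Next I would argue that $B$ itself, viewed over $\F_{q^2}$, satisfies the two hypotheses of Lemma \ref{lem youjie 1} once we also cut down the columns. Here is the key point: in the $(r-1) \times n$ matrix $B$ of rank $r-1$ over $\F_q$, the condition that deleting any column keeps the rank at $r-1$ forces $n \le$ the number of columns of a matrix whose every $(r-1)$-subset of columns is dependent in a controlled way — but what we actually need is to extract an $(r-1) \times r$ submatrix $A'$ of $B$ for which any $r-1$ of its $r$ columns are linearly independent. I would pick $r-1$ columns of $B$ that are linearly independent (possible since $\mathrm{rank}(B) = r-1$), say columns indexed by $S$, and then adjoin one more column; the "rank-preserving under deletion" hypothesis on $B$ guarantees that for the resulting $r$ columns, each $(r-1)$-subset spans the full column space, i.e. is linearly independent. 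This gives condition (1) of Lemma \ref{lem youjie 1}.

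For condition (2) of Lemma \ref{lem youjie 1}, I note that $A'$ has all entries in $\F_q$, so $(A')^{(q)} = A'$ trivially, hence $(A')^{(q)}$ is (row) equivalent to $A'$. Lemma \ref{lem youjie 1} then yields a solution $\bm{u}' = (u_i)_{i \in S \cup \{j\}} \in (\F_q^*)^{r}$ of $A'(\bm{u}')^T = \bm{0}^T$. The final step is to \emph{extend} this partial solution to all $n$ coordinates while staying in $(\F_q^*)^n$ and solving $B\bm{u}^T = \bm{0}^T$; I expect this to be the main obstacle. The idea is to process the remaining $n - r$ columns one at a time: at each stage we have a subset of columns on which a strictly positive solution exists, and we adjoin the next column, using once more the hypothesis that no column deletion drops the rank (so the newly added column is in the span of the current ones, and the augmented homogeneous system still has the right solution dimension), together with the counting bound $n < q+1$ to guarantee we can avoid the finitely many "bad" hyperplanes where some coordinate would be forced to $0$. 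Concretely, at each step the set of scalars making the new coordinate vanish, or making some already-fixed coordinate vanish, is a union of fewer than $q$ proper affine conditions over $\F_q$, leaving a nonzero choice; iterating $n - r$ times produces the desired $\bm{u} \in (\F_q^*)^n$. Making the induction bookkeeping precise — in particular showing at each step that the solution space has the expected dimension and that the number of forbidden values is strictly less than $q$ — is where the real work lies.
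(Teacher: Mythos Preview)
Your overall strategy—reduce to a full-row-rank matrix $B$, invoke Lemma~\ref{lem youjie 1} on a small submatrix, then extend coordinate by coordinate—runs into a genuine obstruction at the step where you pick the $(r-1)\times r$ submatrix $A'$. The hypothesis ``deleting any column of $B$ preserves the rank'' is a statement about the \emph{full} set of $n$ columns; it does \emph{not} imply that any $r-1$ of your chosen $r$ columns are independent. Concretely, take $q\ge 5$, $\rho=\mathrm{rank}(A)=2$, $n=4$, and
\[
B=\begin{pmatrix}1&0&1&0\\0&1&0&1\end{pmatrix}.
\]
Every $B_i$ has rank $2$, so the lemma's hypothesis holds. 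But any three columns of $B$ contain a repeated column, so \emph{no} $2\times 3$ submatrix has all its $2$-subsets of columns independent, and Lemma~\ref{lem youjie 1} cannot be applied as you describe. Your extension step also has an off-by-one problem: at the final stage $m=n-1$ you forbid up to $m+1=n$ values of $\lambda$, which can exhaust $\F_q$ when $n=q$ (the hypothesis only gives $n<q+1$).

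By contrast, the paper's proof avoids both Lemma~\ref{lem youjie 1} and any inductive extension: it is a one-shot counting argument. With $\rho=\mathrm{rank}(A)$, the solution set $S$ has $q^{n-\rho}$ elements; each ``coordinate $u_i=0$'' slice $S_i$ has exactly $q^{n-\rho-1}$ elements because $\mathrm{rank}(A_i)=\rho$; and inclusion--exclusion gives $|V|\ge q^{n-\rho}-n\,q^{n-\rho-1}+1$, the ``$+1$'' coming from the fact that $\bm{0}$ lies in every $S_i$. Since $n\le q$ this is positive. This argument is both shorter and tight at $n=q$, which your approach is not.
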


\begin{proof}
Let $rank(A)=rank(A_1)=\dots=rank(A_n)=r$.
Suppose
$$S=\{\bm{u}\in \F_q^n:A\bm{u}^T=\bm{0}^T\}.$$
Then we have $|S|=q^{n-r}$.
For $1\leq i\leq n$,
define
$$S_i=\{\bm{u}\in \F_q^n:  u_i=0,\ A\bm{u}^T=\bm{0}^T\}\quad and \quad V=\{\bm{u}\in (\F_q^*)^n: A\bm{u}^T=\bm{0}^T\}.$$
Hence
$$V=\overline{S_1\cup S_2 \cup\dots \cup S_n},$$
where $\overline{T}=S-T$ for any subsets $T\subseteq S$.
Note that $S_i=\{\bm{u}\in \F_q^{n-1}: A_i\bm{u}^T=\bm{0}^T\}$, it follows that $|S_i|=q^{n-r-1}$.
By Inclusion-Exclusion principle, we have
$$|V|=|S|-\sum_{i=1}^{n}|S_i|+c,$$
for some $c\geq 0$.
Note that $\bm{0}\in S_1\cap S_2\cap\dots\cap S_n$,
then we can get $c\geq 1$.
It follows that
$$|V|=q^{n-r}-nq^{n-r-1}+c.$$
Obviously $|V|>0$.
Therefore, the equation $A\bm{u}^T=\bm{0}^T$
has a solution $\bm{u}=(u_1,u_2,\dots,u_{n})\in (\F_q^*)^{n}$.
This completes the proof.
\end{proof}

\begin{lemma}\label{lem youjie xin 2}
Suppose that $A$ is a $t\times n$ matrix over $\F_{q^2}$, where $1\leq t\leq n< q+1$.
If the following conditions are met:
\begin{itemize}
\item[(1)] the equation $A\bm{u}^T=\bm{0}^T$ has a solution $\bm{u}=(u_1,u_2,\dots,u_{n})\in (\F_{q^2}^*)^{n}$;
\item[(2)]$A^{(q)}$ is row equivalent to $A$,
\end{itemize}
then the following equation
$$A\bm{u}^T=\bm{0}^T$$
has a solution $\bm{u}=(u_1,u_2,\dots,u_{n})\in (\F_q^*)^{n}$.
\end{lemma}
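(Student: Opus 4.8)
The plan is to reduce the statement to Lemma~\ref{lem youjie xin}, which already disposes of matrices over $\F_q$; the bridge is a Galois-descent argument that converts condition~(2) into information about the $\F_q$-rational part of the solution space. First I would reformulate condition~(2). Row equivalence of $A^{(q)}$ and $A$ means they have the same row space over $\F_{q^2}$, hence the same right kernel. Writing $S=\{\bm{u}\in\F_{q^2}^n:A\bm{u}^T=\bm{0}^T\}$ and applying the coordinatewise $q$-th power to $A\bm{u}^T=\bm{0}^T$ (using $a_{ij}^{q^2}=a_{ij}$) gives $A^{(q)}(\bm{u}^q)^T=\bm{0}^T$, so $\ker A^{(q)}=S^{(q)}$ where $S^{(q)}=\{\bm{u}^q:\bm{u}\in S\}$. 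Thus condition~(2) is equivalent to $S^{(q)}=S$, i.e.\ $S$ is stable under the coordinatewise Frobenius involution $\phi:\bm{u}\mapsto\bm{u}^q$ on $\F_{q^2}^n$.

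Next I would invoke Galois descent for $\phi$: an $\F_{q^2}$-subspace $S\subseteq\F_{q^2}^n$ with $\phi(S)=S$ is spanned over $\F_{q^2}$ by its fixed set $S\cap\F_q^n$, so $\dim_{\F_q}(S\cap\F_q^n)=\dim_{\F_{q^2}}S$. Concretely, for $\bm{s}\in S$ and a fixed $\theta\in\F_{q^2}\setminus\F_q$, the vectors $\bm{s}+\bm{s}^q$ and $\theta\bm{s}+\theta^q\bm{s}^q$ lie in $S\cap\F_q^n$ and span $\{\bm{s},\bm{s}^q\}$ over $\F_{q^2}$, while the reverse inequality is the standard fact that an $\F_q$-independent family in $\F_q^n$ remains $\F_{q^2}$-independent. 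Put $r=\operatorname{rank}A$ and $S'=S\cap\F_q^n$, so $\dim_{\F_q}S'=n-r$. Condition~(1) forces $S\ne\{\bm{0}\}$, hence $r<n$; and if $r=0$ the all-ones vector already solves the system, so we may assume $1\le r<n$.

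Then I would build an $r\times n$ matrix $B$ over $\F_q$ whose rows form a basis of the Euclidean orthogonal complement of $S'$ in $\F_q^n$, so that $\operatorname{rank}B=r$ and $\ker B=S'$ over $\F_q$. For each $i$, let $B_i$ be $B$ with the $i$-th column deleted; then $\ker B_i$ is naturally isomorphic to $S'_i:=\{\bm{u}\in S':u_i=0\}$, so $\operatorname{rank}B_i<r$ exactly when $S'_i=S'$, i.e.\ when every $\F_q$-solution has $i$-th coordinate $0$. But $S_i:=\{\bm{s}\in S:s_i=0\}$ is again $\phi$-stable and is a \emph{proper} subspace of $S$, since the all-nonzero solution furnished by condition~(1) is not in it; by the descent count $\dim_{\F_q}(S_i\cap\F_q^n)<\dim_{\F_q}S'$, and $S_i\cap\F_q^n=S'_i$, so $S'_i\subsetneq S'$ and $\operatorname{rank}B=\operatorname{rank}B_1=\dots=\operatorname{rank}B_n=r$. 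Now $B$ is a $t\times n$ matrix over $\F_q$ with $1\le t=r<n<q+1$ meeting the hypotheses of Lemma~\ref{lem youjie xin}, which produces $\bm{u}\in(\F_q^*)^n$ with $B\bm{u}^T=\bm{0}^T$; then $\bm{u}\in S'\subseteq S$, so $A\bm{u}^T=\bm{0}^T$, as required.

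The main obstacle is the descent step: one must be careful that the relevant involution is coordinatewise Frobenius on $\F_{q^2}^n$ (not a Hermitian/conjugate-transpose operation), that row equivalence translates exactly into $\phi$-stability of $\ker A$, and that the equality $\dim_{\F_q}(S\cap\F_q^n)=\dim_{\F_{q^2}}S$ truly holds --- it fails without condition~(2), which is precisely where that hypothesis enters. Everything after that is bookkeeping. As an alternative, one can bypass $B$ and Lemma~\ref{lem youjie xin} entirely: $S'$ has positive $\F_q$-dimension, each $S_i\cap\F_q^n$ is a proper $\F_q$-subspace of $S'$, and a nonzero vector space over $\F_q$ is not a union of $n\le q$ proper subspaces, which directly yields a vector of $S'$ with all coordinates nonzero.
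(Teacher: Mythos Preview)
Your argument is correct but takes a genuinely different route from the paper. The paper's proof is a direct two-line construction: from the given solution $\bm{c}\in(\F_{q^2}^*)^n$, condition~(2) forces $\bm{c}^q$ to be a solution as well; then for a primitive element $w$ of $\F_{q^2}$ one sets $\bm{b}=w^j\bm{c}+w^{jq}\bm{c}^q$, which is automatically $\phi$-fixed (hence in $\F_q^n$) and lies in $\ker A$, and a pigeonhole over the $q+1$ choices of $j$ against the $n<q+1$ coordinates guarantees some $j$ makes every entry of $\bm{b}$ nonzero. Your approach instead performs a full Galois descent on $\ker A$, proving $\dim_{\F_q}(S\cap\F_q^n)=\dim_{\F_{q^2}}S$ and that each coordinate-vanishing subspace descends to a \emph{proper} $\F_q$-subspace, and then either feeds this into Lemma~\ref{lem youjie xin} or invokes the covering bound for $\le q$ proper subspaces. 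The paper's proof is shorter and more explicit (it actually exhibits the solution as a specific combination of $\bm{c}$ and $\bm{c}^q$), while yours is more structural: it isolates exactly where each hypothesis enters, makes the link to Lemma~\ref{lem youjie xin} transparent, and your alternative ending via the union-of-subspaces bound reveals that the paper's pigeonhole is really the same phenomenon restricted to the one-dimensional $\F_q$-family $\{w^j\bm{c}+w^{jq}\bm{c}^q\}$ inside $S'$.
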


\begin{proof}
Assume that $\bm{c}\in (\F_{q^2}^*)^{n}$ is a solution of the equation $A\bm{u}^T=\bm{0}^T$.
Since $A^{(q)}$ is row equivalent to $A$,
it is easy to check that $\bm{c}^q\in (\F_{q^2}^*)^{n}$ is also a solution of the equation $A\bm{u}^T=\bm{0}^T$.
Let $w$ be a primitive element of $\F_{q^2}$.
Since $n<q+1$, then there exists $j\in \{1,2,\dots,q+1\}$ such that $\bm{c}+w^{j(q-1)}\bm{c}^q\in (\F_{q^2}^*)^{n}$.
Let $\bm{b}=w^j\bm{c}+w^{jq}\bm{c}^q\in (\F_{q^2}^*)^{n}$,
we can know that $\bm{b}^q=\bm{b}$, then $\bm{b}\in (\F_q^*)^n$.
Note that $A\bm{b}^T= w^jA\bm{c}^T+w^{jq}A(\bm{c}^q)^T=\bm{0}^T$.
Therefore, the equation $A\bm{u}^T=\bm{0}^T$
has a solution $\bm{u}=(u_1,u_2,\dots,u_{n})\in (\F_q^*)^{n}$.
This completes the proof.
\end{proof}

In the following, we introduce a lemma proposed in \cite{RefJ (2019)W. some}.
\begin{lemma}\label{lem zhenchu}
\begin{itemize}
\item[(1)]
Denote by $m=\frac{q^2-1}{2s}$ where $2s\mid (q+1)$.
Suppose $1\leq k\leq (s+t+1)\frac{q+1}{2s}-1$, where $0\leq t\leq s-2$.
Then for any $0\leq i,j\leq k-1$, $qi+j=um$ if and only if
$u\in \{0,s-t,s-t+1,\dots,s+t\}$.
\item[(2)]
Denote by $m=\frac{q^2-1}{2s+1}$ where $(2s+1)\mid (q+1)$.
Suppose $1\leq k\leq (s+t+1)\frac{q+1}{2s+1}-1$, where $0\leq t\leq s-1$.
Then for any $0\leq i,j\leq k-1$, $qi+j=um$ if and only if
$u\in \{0,s-t+1,s-t+2,\dots,s+t\}$.
\end{itemize}
\end{lemma}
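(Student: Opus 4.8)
The plan is to prove parts (1) and (2) by a direct base-$q$ analysis of the equation $qi+j = um$, exploiting the fact that $m$ is an explicit divisor of $q^2-1$. I would start with part (1). Write $m = \frac{q^2-1}{2s} = \frac{(q-1)(q+1)}{2s}$; since $2s \mid (q+1)$, set $\ell = \frac{q+1}{2s}$, so $m = (q-1)\ell = q\ell - \ell$. The key observation is that for $u$ in the relevant range, $um = uq\ell - u\ell$, and I want to rewrite this in the form $qi + j$ with $0 \le i,j \le k-1$. Rearranging, $um = q(u\ell - \ell) + (q\ell - u\ell) = q\big((u-1)\ell\big) + (q-u)\ell$ — but one must be careful about which representative falls in $[0,k-1]$. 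So the first step is: given $u$, determine the unique way (if any) to write $um = qi+j$ with $0 \le j \le q-1$ (the ``digit'' constraint), namely $j = um \bmod q$ and $i = \lfloor um/q \rfloor$, and then check when both lie in $[0,k-1]$.

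The second step is the arithmetic heart of the argument. Using $m = q\ell - \ell$, one computes $um \bmod q$ and $\lfloor um/q\rfloor$ explicitly in terms of $u$, $\ell$, $s$, $q$. I expect the outcome to be that $i$ and $j$ are, up to the obvious endpoint adjustments, roughly $(u-1)\ell$-ish and $(q-u)\ell$-ish, both increasing/decreasing linearly in $u$; then imposing $i \le k-1$ and $j \le k-1$ with $k-1 = (s+t+1)\ell - 1$ pins down $u$ to an interval. The claim is that this interval is exactly $\{s-t,\dots,s+t\}$, together with the trivial solution $u=0$ (which gives $i=j=0$). The inequality $0 \le t \le s-2$ should be precisely what guarantees the interval is nonempty and that its endpoints do not collide with the degenerate cases $u=0$ or $u$ too large (where $i$ or $j$ would exceed $k-1$ or $um$ would exceed $q(k-1)+(k-1)$). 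Part (2) is handled identically with $m = \frac{q^2-1}{2s+1}$, $\ell' = \frac{q+1}{2s+1}$, $m = q\ell' - \ell'$, and $k-1 = (s+t+1)\ell' - 1$; the shift from $\{s-t,\dots,s+t\}$ to $\{s-t+1,\dots,s+t\}$ in the answer comes from the parity of the denominator ($2s+1$ versus $2s$) changing the rounding in $um \bmod q$ by one unit, and the constraint $0 \le t \le s-1$ replaces $t \le s-2$.

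The main obstacle I anticipate is bookkeeping the floor/mod computation cleanly: writing $um = u(q-1)\ell = uq\ell - u\ell$ and then reducing mod $q$ requires tracking $u\ell \bmod q$, and since $u\ell$ can exceed $q$ (indeed $u$ ranges up to about $s+t < 2s$ and $\ell = \frac{q+1}{2s}$, so $u\ell$ is comparable to $q$), one gets a case split according to whether $u\ell < q$ or $u\ell \ge q$ — equivalently whether $u < 2s$ or not. Keeping the two directions of the ``if and only if'' straight (every such $u$ yields a valid $(i,j)$; conversely any valid $(i,j)$ forces $u$ into the stated set) is where care is needed, but no deep idea beyond careful inequality-chasing is required. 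A convenient way to organize it is to first establish, for $0 \le i,j \le k-1$, the a priori bound $0 \le qi+j \le q(k-1)+(k-1) < (s+t+2)\frac{q^2-1}{2s}$, so $u = (qi+j)/m < s+t+2$, i.e.\ $u \le s+t+1$; then rule out $u = s+t+1$ and $1 \le u \le s-t-1$ by showing the forced digit $j = um \bmod q$ would violate $j \le k-1$, leaving exactly $u \in \{0\} \cup \{s-t,\dots,s+t\}$, and finally verify directly that each of these $u$ does admit a representation with $i,j \le k-1$.
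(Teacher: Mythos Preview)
The paper does not actually prove Lemma~\ref{lem zhenchu}; it is quoted from \cite{RefJ (2019)W. some}. However, the paper \emph{does} prove the closely analogous Lemmas~\ref{lem zhenchu 2} and~\ref{lem zhenchu 3} immediately afterwards, and those proofs use exactly the technique you propose: write $m=(q-1)\ell$ with $\ell=\frac{q+1}{2s}$ (resp.\ $\frac{q+1}{2s+1}$), express $um$ in the form $qi+j$ with $0\le j\le q-1$ to read off $i$ and $j$ explicitly as linear functions of $u$, and then impose $i\le k-1$ and $j\le k-1$ to pin down the admissible range of $u$. Concretely, for part~(1) one finds $i=u\ell-1$ and $j=(2s-u)\ell-1$ for $1\le u\le 2s-1$, and the two inequalities give $u\le s+t$ and $u\ge s-t$ respectively; part~(2) is identical with $2s$ replaced by $2s+1$, which shifts the lower endpoint to $s-t+1$. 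So your approach is correct and coincides with the paper's method.

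One remark: the case split you worry about (whether $u\ell<q$ or $u\ell\ge q$) does not actually arise here. Since $t\le s-2$ in part~(1) (resp.\ $t\le s-1$ in part~(2)), the relevant $u$ satisfy $u\le s+t\le 2s-2$ (resp.\ $\le 2s-1$), whence $u\ell\le q-1$ in both cases, and the base-$q$ digits of $um$ are given by a single formula without splitting. This simplifies the bookkeeping you anticipated. (By contrast, in the paper's proof of Lemma~\ref{lem zhenchu 3}, where $2s\mid(q-1)$ instead of $q+1$, a genuine case split on $u\le s$ versus $u>s$ is needed.)
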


Similarly, we have the following lemmas.

\begin{lemma}\label{lem zhenchu 2}
Denote by $m=\frac{q^2-1}{2s}$ where $2s\mid (q+1)$.
Suppose $1\leq k\leq (s+t+2)\frac{q+1}{2s}-2$, where $0\leq t\leq s-2$.
Then for any $0\leq i,j\leq k-1$, $qi+j+q+1-\frac{q+1}{2s}=um$ if and only if
$u\in\{s-t,s-t+1,\dots,s+t+1\}$.
\end{lemma}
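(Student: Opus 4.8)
The plan is to convert the statement into a transparent division-with-remainder count. First I would set $M=\frac{q+1}{2s}$, a positive integer because $2s\mid (q+1)$; then $q+1=2sM$, $m=\frac{q^{2}-1}{2s}=(q-1)M$ and $q+1-\frac{q+1}{2s}=(2s-1)M$, so the relation $qi+j+q+1-\frac{q+1}{2s}=um$ is equivalent to
$$qi+j=N_{u}:=\bigl(u(q-1)-(2s-1)\bigr)M .$$
Thus the task is to decide for which $u$ the integer $N_{u}$ belongs to $\mathcal{A}:=\{\,qi+j:0\le i,j\le k-1\,\}$. Since $t\le s-2$ gives $(s+t+2)M\le 2sM=q+1$, the hypothesis $k\le (s+t+2)M-2$ forces $k\le q-1$, so the intervals $[\,qi,\ qi+k-1\,]$ for $i=0,\dots,k-1$ are pairwise disjoint and $\mathcal{A}$ is exactly their union; hence a nonnegative integer $N$ lies in $\mathcal{A}$ if and only if $\lfloor N/q\rfloor\le k-1$ and $N\bmod q\le k-1$.

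With this reduction I would first discard the values $u\notin\{1,\dots,2s-1\}$: for $u\le 0$ one has $N_{u}\le (1-2s)M<0$, and for $u\ge 2s$ one gets $N_{u}\ge N_{2s}=(q+1)(q-2)+M>(q+1)(k-1)=\max\mathcal{A}$, using $2sM=q+1$ and $k\le q-1$. For $1\le u\le 2s-1$, the congruence $2sM\equiv 1\pmod q$ gives the explicit Euclidean division
$$N_{u}=q\,(uM-2)+\bigl((2s-u+1)M-2\bigr),\qquad 0\le (2s-u+1)M-2\le q-1 ,$$
so that $\lfloor N_{u}/q\rfloor=uM-2$ and $N_{u}\bmod q=(2s-u+1)M-2$. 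Substituting into the criterion above, $N_{u}\in\mathcal{A}$ amounts to $0\le uM-2\le k-1$ and $0\le (2s-u+1)M-2\le k-1$; by $k-1\le (s+t+2)M-3$ the two upper inequalities collapse to $u\le s+t+1$ and $u\ge s-t$, and on this range the two lower inequalities hold automatically because $s-t\ge 2$. This proves the ``only if'' part. For the ``if'' part the same identity shows that, for $u\in\{s-t,\dots,s+t+1\}$, the pair $i=uM-2$, $j=(2s-u+1)M-2$ satisfies the equation, and since both $uM-2$ and $(2s-u+1)M-2$ are at most $(s+t+1)M-2$ on this range, the bounds $0\le i,j\le k-1$ hold as soon as $k\ge (s+t+1)M-1$.

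The arithmetic is routine and parallels the proof of Lemma~\ref{lem zhenchu}; the only part needing real care is the boundary bookkeeping. One must treat separately the degenerate case $2s=q+1$ (i.e.\ $M=1$), where $N_{u}$ can be negative for the smallest admissible $u$ so that the division formula must be applied with that proviso, and one must be explicit about the smallest allowed $k$: for $k<(s+t+1)M-1$ the set of attainable $u$ is a proper subset of $\{s-t,\dots,s+t+1\}$, so the stated equivalence really says that, once $k$ sits in the top $M=\frac{q+1}{2s}$ values of the admissible range (equivalently, once $t$ is the value matched to $k$), both endpoints $u=s-t$ and $u=s+t+1$ are realized. Checking these few cases finishes the proof.
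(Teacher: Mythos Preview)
Your proposal is correct and follows essentially the same approach as the paper: both rewrite $qi+j$ in base~$q$ as $q\bigl(\frac{u(q+1)}{2s}-2\bigr)+\bigl(\frac{(2s+1-u)(q+1)}{2s}-2\bigr)$ and read off $i$ and $j$ from the quotient and remainder, then bound $u$ by comparing each against $k-1$. Your write-up is in fact more careful than the paper's: you explicitly dispose of $u\le 0$ and $u\ge 2s$ (the paper silently restricts to $1\le u\le 2s$), you flag the degenerate case $M=1$, and you correctly observe that the ``if'' direction only holds once $k\ge (s+t+1)\frac{q+1}{2s}-1$ --- a point the paper's proof omits entirely, though it is harmless for the application in Theorem~\ref{th2}, where only the ``only if'' direction is actually needed.
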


\begin{proof}
We suppose that $qi+j+q+1-\frac{q+1}{2s}$ is divisible by $\frac{q^2-1}{2s}$ for some $0\leq i,j\leq k-1$.
Then there exists an integer $u$ such that
$$qi+j+q+1-\frac{q+1}{2s}=u\frac{q^2-1}{2s}.$$
Hence, we have
\[\begin{split} qi+j&=[\frac{u(q+1)}{2s}-2]q+q-1-\frac{u(q+1)}{2s}+\frac{q+1}{2s}\\
                           &=[\frac{u(q+1)}{2s}-2]q+\frac{(2s+1-u)(q+1)}{2s}-2.\\
	\end{split}\]
For $1\leq u\leq 2s$,
it follows that
$$i=\frac{u(q+1)}{2s}-2,\quad j=\frac{(2s+1-u)(q+1)}{2s}-2.$$

If $u\geq s+t+2$, then
$$i=\frac{u(q+1)}{2s}-2\geq (s+t+2)\frac{q+1}{2s}-2\geq k,$$
which contradicts to the fact $i\leq k-1$;

If $u\leq s-t-1$, then
$$j=\frac{(2s+1-u)(q+1)}{2s}-2\geq (s+t+2)\frac{q+1}{2s}-2\geq k,$$
which contradicts to the fact $j\leq k-1$.
Hence, $u\in\{s-t,s-t+1,\dots,s+t+1\}$.
This completes the proof.
\end{proof}

\begin{lemma}\label{lem zhenchu 3}
Denote by $m=\frac{q^2-1}{2s}$ where $2s\mid (q-1)$.
Suppose $1\leq k\leq (s+t)\frac{q-1}{2s}$, where $1\leq t\leq s$.
Then for any $0\leq i,j\leq k-1$, $qi+j+\frac{q+1}{2}=um$ if and only if
$u\in \{1,2,\dots,t-1,s+1,s+2,\dots,s+t-1\}$.
\end{lemma}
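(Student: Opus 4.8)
The plan is to follow the pattern of the proof of Lemma~\ref{lem zhenchu 2}: introduce a convenient integer parameter, divide $qi+j+\frac{q+1}{2}$ by $q$, and use the hypothesis $k\le(s+t)\frac{q-1}{2s}$ to force the quotient into a narrow window. Set $\ell=\frac{q-1}{2s}$, which is a positive integer since $2s\mid(q-1)$; then $q=2s\ell+1$, $m=\frac{q^2-1}{2s}=\ell(q+1)$, and $\frac{q+1}{2}=s\ell+1$. With these substitutions the relation $qi+j+\frac{q+1}{2}=um$ becomes
\[
q\,(i-u\ell)=(u-s)\ell-1-j .
\]
So the whole question reduces to: for which integers $u$ does this identity admit $0\le i,j\le k-1$ with $k-1\le(s+t)\ell-1$?

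First I would bound $u$. Since $qi+j+\frac{q+1}{2}\ge\frac{q+1}{2}>0$ and $m>0$, we get $u\ge1$; and since $qi+j+\frac{q+1}{2}\le(q+1)(k-1)+\frac{q+1}{2}=(q+1)\bigl(k-\tfrac12\bigr)<(s+t)\ell(q+1)=(s+t)m$, we get $u\le s+t-1$. Then I would pin down $i$ from the displayed identity: its right-hand side $(u-s)\ell-1-j$ is at most $(t-1)\ell-1<q$, and, using $j\le(s+t)\ell-1$ together with $t\le s$, it is strictly larger than $-2q$; hence the multiple of $q$ on the left-hand side lies in the open interval $(-2q,q)$ and therefore equals $-q$ or $0$, i.e.\ $i=u\ell$ or $i=u\ell-1$. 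I expect this squeezing step to be the main obstacle: it is the only place where the precise shape of the bound on $k$ and the hypothesis $t\le s$ really enter, and getting the inequalities to land on exactly two possibilities for $i$ takes a little care. Everything after it is bookkeeping.

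Finally I would split into the two cases. If $i=u\ell$, the identity gives $j=(u-s)\ell-1$; then $j\ge0$ forces $u\ge s+1$, and combined with $u\le s+t-1$ this yields $u\in\{s+1,\dots,s+t-1\}$. If $i=u\ell-1$, then, using $q=2s\ell+1$, the identity gives $j=(u-s)\ell-1+q=(u+s)\ell$; then $j\le k-1\le(s+t)\ell-1$ forces $u+s<s+t$, i.e.\ $u\le t-1$, and combined with $u\ge1$ (which also guarantees $i=u\ell-1\ge0$) this yields $u\in\{1,\dots,t-1\}$. Taking the union proves the ``only if'' direction, $u\in\{1,\dots,t-1\}\cup\{s+1,\dots,s+t-1\}$. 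For the reverse direction one substitutes the pairs $(i,j)=(u\ell-1,(u+s)\ell)$ and $(i,j)=(u\ell,(u-s)\ell-1)$ back into the relation and checks that $0\le i,j\le k-1$ when $k$ attains its maximal value $(s+t)\ell$, a routine verification entirely parallel to the one for Lemma~\ref{lem zhenchu 2}.
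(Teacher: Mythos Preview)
Your proof is correct and follows essentially the same approach as the paper's: both arguments amount to writing $qi+j=um-\frac{q+1}{2}$ in the form $q\cdot(\text{quotient})+(\text{remainder})$ and reading off $i$ and $j$, then using the bound $k\le(s+t)\frac{q-1}{2s}$ to exclude the bad values of $u$. The only cosmetic difference is that the paper splits a priori into the ranges $1\le u\le s$ and $s+1\le u\le 2s$ and computes $(i,j)$ in each, whereas you first squeeze $q(i-u\ell)$ into $(-2q,q)$ to force $i\in\{u\ell-1,u\ell\}$ and then split; your organization has the minor advantage of making explicit why only two base-$q$ decompositions can occur.
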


\begin{proof}
We suppose that $qi+j+\frac{q+1}{2}$ is divisible by $\frac{q^2-1}{2s}$ for some $0\leq i,j\leq k-1$.
Then there exists an integer $u$ such that
$$qi+j+\frac{q+1}{2}=u\frac{q^2-1}{2s}.$$
Hence, for $1\leq u\leq s$,
we have
\[\begin{split} qi+j&=u\frac{q^2-1}{2s}-\frac{q+1}{2}\\
                           &=[\frac{u(q-1)}{2s}-1]q+\frac{(s+u)(q-1)}{2s}.\\
	\end{split}\]
It follows that
$$i=\frac{u(q-1)}{2s}-1,\quad j=\frac{(s+u)(q-1)}{2s}.$$

If $t\leq u\leq s$, then
$$j=\frac{(s+u)(q-1)}{2s}\geq (s+t)\frac{q-1}{2s}\geq k,$$
which contradicts to the fact $j\leq k-1$;

On other hand, for $s+1\leq u\leq 2s$,
we have
\[\begin{split} qi+j&=u\frac{q^2-1}{2s}-\frac{q+1}{2}\\
                           &=[\frac{u(q-1)}{2s}]q+\frac{(u-s)(q-1)}{2s}-1.\\
	\end{split}\]
It follows that
$$i=\frac{u(q-1)}{2s},\quad j=\frac{(u-s)(q-1)}{2s}-1.$$

If $s+t\leq u\leq 2s$, then
$$i=\frac{u(q-1)}{2s}\geq (s+t)\frac{q-1}{2s}\geq k,$$
which contradicts to the fact $i\leq k-1$.
Hence, $u\in \{1,2,\dots,t-1,s+1,s+2,\dots,s+t-1\}$.
This completes the proof.
\end{proof}

\section{The construction of quantum MDS codes}\label{sec4}

In this section, we construct some new classes of quantum MDS codes via Hermitian self-orthogonal GRS codes.
Let $w$ be a primitive element of $\F_{q^2}$ and $q^2-1=mh$.
Suppose that $\langle \theta\rangle$ is a cyclic subgroup of the multiplicative group $\F_{q^2}^*$, where $\theta=w^h\in \F_{q^2}$.
We can know that $w^{i_1}\theta^{j_1}\neq w^{i_2}\theta^{j_2}$ for any $0\leq i_1\neq i_2\leq r-1$ and $0\leq j_1\neq j_2\leq m-1$, where $1\leq r \leq h$.

\subsection{First construction of quantum MDS codes}

Let $h\mid (q+1)$. Put
$$\bm{a}=(0, w, w\theta,\dots,w\theta^{m-1},\dots,w^r,w^{r}\theta,\dots,w^{r}\theta^{m-1})\in \F_{q^2}^{r\frac{q^2-1}{h}+1},$$
and
$$\bm{v}=(v_0,\underbrace{v_1,\dots,v_1}_{m\ times},\dots,\underbrace{v_r,\dots,v_r}_{m\ times}),$$
where $v_0,\dots,v_r\in \F_{q^2}^*$.
Then when $(i,j)=(0,0)$, we have
\begin{equation}\label{eq qiuhe 0}
\langle \bm{a}^{0},\bm{v}^{q+1}\rangle_E=v_0^{q+1}+m\sum_{l=1}^rv_l^{q+1}.
\end{equation}
And when $(i,j)\neq (0,0)$, we have
$$\langle \bm{a}^{qi+j},\bm{v}^{q+1}\rangle_E=\sum_{l=1}^rw^{l(qi+j)}v_l^{q+1}\sum_{\nu=0}^{m-1}\theta^{\nu(qi+j)},$$
thus
\begin{equation}\label{eq qiuhe 2}
\langle \bm{a}^{qi+j},\bm{v}^{q+1}\rangle_E=\begin{cases}
0; & if\ m\nmid (qi+j),\\
m\sum_{l=1}^rw^{l(qi+j)}v_l^{q+1}; & if\  m\mid (qi+j).
\end{cases}
\end{equation}

\begin{theorem}\label{th1}
Let $n=r\frac{q^2-1}{h}+1$,
where $h\mid (q+1)$, $h\geq 3$ and $1< r< \min\{q,h\}$.
If $2\nmid (r+h)$,
then for any $1\leq k\leq (\frac{r+h-1}{2})\frac{q+1}{h}-1$,
there exists an $[[n,n-2k,k+1]]_q$-quantum MDS code.
\end{theorem}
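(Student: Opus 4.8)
The plan is, for each admissible $k$, to choose a column-multiplier vector $\bm{v}\in(\F_{q^2}^*)^{r+1}$ making $\GRS_k(\bm{a},\bm{v})$ (with $\bm{a}$ as above) Hermitian self-orthogonal; Lemma~\ref{Lem zhuyao} then produces an $[[n,n-2k,k+1]]_q$ code, which is automatically MDS because $2(k+1)=n-(n-2k)+2$. By Lemma~\ref{lem (a,v)=0} together with (\ref{eq qiuhe 0})--(\ref{eq qiuhe 2}), Hermitian self-orthogonality of $\GRS_k(\bm{a},\bm{v})$ is equivalent to $v_0^{q+1}+m\sum_{l=1}^{r}v_l^{q+1}=0$ and $\sum_{l=1}^{r}\zeta^{lu}v_l^{q+1}=0$ for every integer $u>0$ that can be written $um=qi+j$ with $0\le i,j\le k-1$, where $\zeta=w^{m}$ has multiplicative order $h$ and $p\nmid m$. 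Putting $u_l=v_l^{q+1}$ and using that the norm map $\F_{q^2}^*\to\F_q^*$ is onto, it suffices to produce $(u_0,u_1,\dots,u_r)\in(\F_q^*)^{r+1}$ solving this homogeneous system.

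First I would identify which $u$ occur. Since $2\nmid(r+h)$, either $h=2s$ with $r$ odd, in which case I invoke Lemma~\ref{lem zhenchu}(1) with $t=(r-3)/2$, or $h=2s+1$ with $r$ even, in which case I invoke Lemma~\ref{lem zhenchu}(2) with $t=(r-2)/2$; in both cases the stated range of $k$ is exactly the one required there, and the positive values of $u$ fill a block $U$ of $r-2$ consecutive residues modulo $h$ that is symmetric about $h/2$, so that $-U\equiv U$ and $\min U+\max U\equiv 0\pmod h$. Collect the equations into the $(r-1)\times(r+1)$ matrix $A$ over $\F_{q^2}$ with columns indexed $0,1,\dots,r$, whose $0$-th row is $(1,m,m,\dots,m)$ and whose remaining rows are $(0,\zeta^{u},\zeta^{2u},\dots,\zeta^{ru})$, $u\in U$. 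Because $r<h$ the powers $\zeta,\zeta^2,\dots,\zeta^r$ are pairwise distinct, and since the exponents in $U$ are consecutive any $r-2$ of the columns $1,\dots,r$ form an invertible generalized Vandermonde block; hence $\mathrm{rank}(A)=r-1$ and $\mathrm{rank}(A_i)=r-1$ for every $i\ge 1$. Moreover $\zeta^{q}=\zeta^{-1}$ and $-U\equiv U$ show that $A^{(q)}$ is a row permutation of $A$, so $A$ is row equivalent to a matrix over $\F_q$ with the same kernel.

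It then remains to establish $\mathrm{rank}(A_0)=r-1$, and this is the step I expect to be the main obstacle. Equivalently, one must show that the all-ones vector of length $r$ is not in the row space of $M:=(\zeta^{lu})_{u\in U,\,1\le l\le r}$; writing $a=\min U$, the row space of $M$ equals $\{(\zeta^{la}g(\zeta^l))_{l}:\deg g\le r-3\}$, so the claim is that no polynomial $g$ with $\deg g\le r-3$ satisfies $g(\zeta^l)=\zeta^{-la}=(\zeta^l)^{h-a}$ for $l=1,\dots,r$. It is enough to check that the unique polynomial of degree $\le r-1$ interpolating $X^{h-a}$ at $\zeta^1,\dots,\zeta^r$ actually has degree $r-1$, i.e. that its $X^{r-1}$-coefficient $\sum_{l=1}^{r}\zeta^{l(h-a)}/\prod_{l'\ne l}(\zeta^l-\zeta^{l'})$ is nonzero. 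A standard symmetric-function computation identifies this coefficient with the complete homogeneous symmetric polynomial of degree $d:=(h-a)-r+1$ in $\zeta,\zeta^2,\dots,\zeta^r$, and $\min U+\max U\equiv 0\pmod h$ gives $h-a=\max U$, hence $d=(h-r-1)/2\ge 0$; for the geometric progression $\zeta,\zeta^2,\dots,\zeta^r$ this symmetric polynomial equals $\zeta^{d}\prod_{i=1}^{d}\frac{1-\zeta^{\,r-1+i}}{1-\zeta^{\,i}}$, and since $r<h$ keeps each exponent $r-1+i$ and $i$ (for $1\le i\le d$) within $\{1,\dots,h-1\}$, no factor vanishes and the coefficient is nonzero. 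Granting this, $A$ has rank $r-1$, its rank is unchanged under deletion of any single column, $A^{(q)}$ is row equivalent to $A$, and $r+1<q+1$; so by Lemma~\ref{lem youjie xin} (applied to the matrix over $\F_q$ row equivalent to $A$), or equivalently by Lemma~\ref{lem youjie xin 2}, the system $A\bm{u}^T=\bm{0}^T$ has a solution $(u_0,\dots,u_r)\in(\F_q^*)^{r+1}$. Taking $v_l\in\F_{q^2}^*$ with $v_l^{q+1}=u_l$ for $0\le l\le r$ yields the required Hermitian self-orthogonal $\GRS_k(\bm{a},\bm{v})$, which establishes the theorem.
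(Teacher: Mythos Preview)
Your approach coincides with the paper's: the same code-locator vector $\bm a$, the same invocation of Lemma~\ref{lem zhenchu} to isolate the relevant exponent set $U$, essentially the same $(r-1)\times(r+1)$ coefficient matrix (the paper takes first row $(1,1,\dots,1)$ and compensates via $v_0^{q+1}=mu_0$, while you take $(1,m,\dots,m)$ directly), and the same appeal to Lemmas~\ref{lem youjie xin} and~\ref{lem youjie xin 2} once every column-deleted submatrix is shown to have rank $r-1$. The only substantive difference is that you supply a complete argument, via the divided-difference/Gaussian-binomial identity $h_d(\zeta,\dots,\zeta^r)=\zeta^d\prod_{i=1}^d\frac{1-\zeta^{r-1+i}}{1-\zeta^i}$, for the step $\mathrm{rank}(A_0)=r-1$, whereas the paper simply asserts the equivalent claim $\mathrm{rank}(B_1)=r-1$ without justification; your computation is correct (the exponent bounds $1\le i\le d$ and $r\le r-1+i\le h-a<h$ indeed keep every factor nonzero) and fills what is arguably a gap in the paper's presentation.
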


\begin{proof}
Keep the notations as above. We divide our proof into the following two parts according to whether $h$ is even or odd.

$\bullet$ $\textbf{Case 1:}$ $h=2s$.

Since $2\nmid (r+h)$, suppose $r=2t+3$, where $0\leq t\leq s-2$ and $1\leq k\leq (s+t+1)\frac{q+1}{2s}-1$.
By Lemma \ref{lem zhenchu}, $qi+j=um$ if and only if $u\in \{0,s-t,s-t+1,\dots,s+t\}$.
Then by Eq. (\ref{eq qiuhe 2}), when $qi+j=um$ for $s-t\leq u\leq s+t$,
we have
$$\langle \bm{a}^{qi+j},\bm{v}^{q+1}\rangle_E=m\sum_{l=1}^{r}w^{l u m}v_l^{q+1}. $$
Let $\alpha=w^m\in \F_{q^2}^*$ be a primitive $2s$-th root of unity.
Let $a=s-t$, then we can get $\alpha^{a+v}\neq \alpha^{a+v'}\neq 1$
for any $0\leq v\neq v'\leq r-3$.
Let
$$A=\begin{pmatrix}
 1&   1    &  1  &  \dots  & 1  \\
 0& \alpha^{a} &  \alpha^{2a} & \dots & \alpha^{ra}\\
 0& \alpha^{a+1} &  \alpha^{2(a+1)}  &\dots & \alpha^{r(a+1)}\\
 \vdots& \vdots &  \vdots & \ddots & \vdots\\
 0& \alpha^{a+r-3} & \alpha^{2(a+r-3)} & \dots & \alpha^{r(a+r-3)}\\
\end{pmatrix}$$
be an $(r-1)\times (r+1)$ matrix over $\F_{q^2}$.
For $1\leq i\leq r+1$, let $A_i$ be the $(r-1)\times r$ matrix obtained from $A$ by deleting the $i$-th column.
Then
$$rank(A_1)=rank(B_1)=r-1,$$
where
$$B_1=\begin{pmatrix}
   1    &  1  &  \dots  & 1  \\
   1 &  \alpha^{a} & \dots & \alpha^{(r-1)a}\\
  1 &  \alpha^{a+1}  &\dots & \alpha^{(r-1)(a+1)}\\
   \vdots &  \vdots & \ddots & \vdots\\
 1 & \alpha^{a+r-3} & \dots & \alpha^{(r-1)(a+r-3)}\\
\end{pmatrix}.$$
And for $2\leq i\leq r+1$
$$rank(A_i)=rank(B_i)+1=r-1,$$
where
$$B_i=\begin{pmatrix}
   1           & \dots   &  1              &1        &  \dots  & 1  \\
   \alpha      &\dots   &  \alpha^{i-2}   &\alpha^{i}       & \dots & \alpha^{r}\\
  \alpha^2     &\dots   &  \alpha^{2(i-2)} & \alpha^{2i}     &\dots & \alpha^{2r}\\
   \vdots      &\ddots  &  \vdots             &\vdots   & \ddots    & \vdots\\
 \alpha^{r-3} &\dots    & \alpha^{(r-3)(i-2)} &\alpha^{(r-3)i}           & \dots & \alpha^{(r-3)r}\\
\end{pmatrix}.$$
It follows that $rank(A)=rank(A_1)=\dots=rank(A_{r+1})=r-1$.
By Lemma \ref{lem youjie xin}, equation $A\bm{u}^T=\bm{0}^T$
has a solution $\bm{u}=(u_0,u_1,\dots,u_{r})\in (\F_{q^2}^*)^{r+1}$.
Note that $\alpha^q=\alpha^{-1}$ and $\alpha^{2s}=1$,
we have
$$\alpha^{l(a+j)q}=\alpha^{-l(s-t+j)}=\alpha^{l(s+t-j)}=\alpha^{l(a+2t-j)},$$
for any $1\leq l\leq r$ and $0\leq j\leq 2t$.
It follows that $A$ is row equivalent to $A^{(q)}$.
By Lemma \ref{lem youjie xin 2},
the equation $A\bm{u}^T=\bm{0}^T$
has a solution $\bm{u}=(u_0,u_1,\dots,u_{r})\in (\F_q^*)^{r+1}$.

$\bullet$ $\textbf{Case 2:}$ $h=2s+1$.

Since $2\nmid (r+h)$, suppose $r=2t+2$, where $0\leq t\leq s-1$ and $1\leq k\leq (s+t+1)\frac{q+1}{2s}-1$.
By Lemma \ref{lem zhenchu}, $qi+j=um$ if and only if $u\in \{0,s-t+1,s-t+2,\dots,s+t\}$.
Then by Eq. (\ref{eq qiuhe 2}), when $qi+j=u m$ for $s-t+1\leq u\leq s+t$,
we have
$$\langle \bm{a}^{qi+j},\bm{v}^{q+1}\rangle_E=m\sum_{l=1}^{r}w^{l u m}v_l^{q+1}. $$
Let $\alpha=w^m\in \F_{q^2}^*$ be a primitive $(2s+1)$-th root of unity.
Put $a=s-t+1$.
Let the definition of $A$ be the same as that in $\textbf{Case 1}$.
Similar to the proof of $\textbf{Case 1}$,
the equation $A\bm{u}^T=\bm{0}^T$
has a solution $\bm{u}=(u_0,u_1,\dots,u_{r})\in (\F_q^*)^{r+1}$.

Combining $\textbf{Case 1}$ and $\textbf{Case 2}$,
let $v_l^{q+1}=u_l$ for $1\leq l\leq r$ and let $v_0\in \F_{q^2}^*$ such that $v_0^{q+1}=u_0m$.
By Eq. (\ref{eq qiuhe 0}), we have
$$\langle \bm{a}^{0},\bm{v}^{q+1}\rangle_E=v_0^{q+1}+m\sum_{l=1}^rv_l^{q+1}=m\sum_{l=0}^{r}u_l=0.$$
And by Eq. (\ref{eq qiuhe 2}),
 when $qi+j=um$, we have
$$\langle \bm{a}^{qi+j},\bm{v}^{q+1}\rangle_E=m\sum_{l=1}^{r}w^{l u m}v_l^{q+1}=m\sum_{l=1}^{r}\alpha^{lu}u_l=0.$$
Hence $\langle \bm{a}^{qi+j},\bm{v}^{q+1}\rangle_E=0$, for all $0\leq i,j\leq k-1$.
Therefore, by Lemma \ref{lem (a,v)=0},
the code $GRS_k(\bm{a},\bm{v})$ is Hermitian self-orthogonal.
Then the desired result follows from Lemma \ref{Lem zhuyao}.
This completes the proof.
\end{proof}

Fang et al. [\cite{RefJ (2019)W. some}, Theorem 6.3] constructed a family of quantum MDS codes of length $n=(2t+1)\frac{q^2-1}{2s}$, where $2s\mid (q+1)$.
Applying the propagation rule (see Lemma \ref{lem propagation rule}) for Theorem \ref{th1} Case 1,
we can immediately obtain the following result.

\begin{corollary}
Let $n=r\frac{q^2-1}{2s}$,
where $2s\mid (q+1)$, $r=2t+1$ and $1\leq t< \min\{\frac{q-1}{2},s\}$.
Then for any $1\leq k\leq (s+t)\frac{q+1}{2s}-2$,
there exists an $[[n,n-2k,k+1]]_q$-quantum MDS code.
\end{corollary}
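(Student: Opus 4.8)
The plan is to deduce the corollary directly from Theorem~\ref{th1} (Case~1) together with the propagation rule of Lemma~\ref{lem propagation rule}, so essentially no new computation is required---only a bookkeeping check that the hypotheses line up. First I would observe that in Theorem~\ref{th1} the length $n' = r\frac{q^2-1}{h}+1$ with $h = 2s$, $h \mid (q+1)$, $h \geq 3$, $1 < r < \min\{q,h\}$, and $2 \nmid (r+h)$; writing $r = 2t+1$ makes $r+h = 2t+1+2s$ odd automatically, so the parity condition $2\nmid(r+h)$ is satisfied for free, and the constraint $1 < r < \min\{q,h\} = \min\{q,2s\}$ becomes $1 \leq t < \min\{\frac{q-1}{2}, s\}$, which is exactly the hypothesis of the corollary. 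So Theorem~\ref{th1} Case~1 yields an $[[n', n'-2k', k'+1]]_q$-quantum MDS code for every $1 \leq k' \leq (\frac{r+h-1}{2})\frac{q+1}{h} - 1 = (s+t)\frac{q+1}{2s} - 1$.

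Next I would apply Lemma~\ref{lem propagation rule} once: from the $[[n', n'-2k', k'+1]]_q$-code it produces an $[[n'-1, n'-2k'+1, k']]_q$-quantum MDS code. Setting $n = n' - 1 = r\frac{q^2-1}{2s}$ and $k = k'$, this is an $[[n, n-2k, k+1]]_q$-code: indeed $n'-1 = n$, $n'-2k'+1 = (n+1) - 2k + 1$, and one checks $n - 2k = (n+1) - 2k + 1$ is the same and $k = (k'+1)-1$ matches the claimed minimum distance $k+1 = k'$... wait, I should be careful: the propagated code has minimum distance $k'$, and in terms of the new parameter $k = k'-1$? Let me re-index cleanly in the write-up so that the output code is literally $[[n-1, n-2k+1, k]]$ with $k = k'$; then renaming the dimension parameter, the corollary's code $[[n, n-2k, k+1]]_q$ corresponds to taking $k' - 1$ in place of $k$, i.e. the range $1 \leq k \leq (s+t)\frac{q+1}{2s} - 2$ is precisely the shifted range $1 \leq k'-1$ with $k' \leq (s+t)\frac{q+1}{2s} - 1$. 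This index shift is the only subtle point and I would present it explicitly.

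Finally I would record the range check: for the corollary to be non-vacuous we need $(s+t)\frac{q+1}{2s} - 2 \geq 1$, which holds since $t \geq 1$ and $h = 2s \mid (q+1)$ force $(s+t)\frac{q+1}{2s} \geq (s+1)\frac{q+1}{2s} \geq q+1 + \frac{q+1}{2s} \geq 3$ once $q$ is not too small; and the MDS property is inherited because Lemma~\ref{lem propagation rule} is stated as producing quantum \emph{MDS} codes. The main (and really only) obstacle is getting the parameter substitutions $r = 2t+1$, $h = 2s$, $n = n'-1$, and the $k \leftrightarrow k'$ shift all consistent at once; there is no genuine mathematical difficulty beyond invoking the two earlier results in sequence.
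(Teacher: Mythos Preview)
Your approach is correct and is exactly the paper's own argument: the corollary is stated immediately after the sentence ``Applying the propagation rule (see Lemma~\ref{lem propagation rule}) for Theorem~\ref{th1} Case~1, we can immediately obtain the following result,'' with no further proof given. Your index shift $k = k'-1$ is the right one (the propagated code $[[n'-1,\,n'-2k'+1,\,k']]$ equals $[[n,\,n-2k,\,k+1]]$ precisely when $k=k'-1$), and the range $2\le k'\le (s+t)\frac{q+1}{2s}-1$ translates to $1\le k\le (s+t)\frac{q+1}{2s}-2$ as required. One minor arithmetic slip: in your non-vacuousness check you wrote $(s+1)\frac{q+1}{2s}\ge q+1+\frac{q+1}{2s}$, but the left side is $\frac{q+1}{2}+\frac{q+1}{2s}$; the conclusion $\ge 3$ still holds since $2s\mid(q+1)$ forces $\frac{q+1}{2}\ge 2$ and $\frac{q+1}{2s}\ge 1$.
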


\begin{example}
In this example,
we give some quantum MDS codes from Theorem \ref{th1}.
\begin{itemize}
\item[(1)]
When $8\mid (q+1)$, let $(r,h)=(5,8)$ in Theorem \ref{th1} Case 1.
Then for any $1\leq k\leq \frac{3}{4}(q+1)-1$,
there exists a $[[\frac{5}{8}(q^2-1)+1,\frac{5}{8}(q^2-1)+1-2k,k+1]]_q$-quantum MDS code;
\item[(2)]
When $9\mid (q+1)$, let $(r,h)=(8,9)$ in Theorem \ref{th1} Case 2.
Then for any $1\leq k\leq \frac{8}{9}(q+1)-1$, there exists
a $[[\frac{8}{9}(q^2-1)+1,\frac{8}{9}(q^2-1)+1-2k,k+1]]_q$-quantum MDS code.
\end{itemize}
\end{example}

\subsection{Second construction of quantum MDS codes}

Let $h\mid (q+1)$. Put
$$\bm{a}=(w^{i_1}, w^{i_1}\theta,\dots,w^{i_1}\theta^{m-1},\dots,w^{i_r},w^{i_r}\theta,\dots,w^{i_r}\theta^{m-1})\in \F_{q^2}^{r\frac{q^2-1}{h}},$$
where $i_1,i_2,\dots, i_r$ are distinct modulo $h$, and
$$\bm{v}=(v_1,v_1w^{h-1},\dots,v_1w^{(m-1)(h-1)},\dots,v_r,v_rw^{h-1}\dots,v_rw^{(m-1)(h-1)}),$$
where $v_1,\dots,v_r\in \F_{q^2}^*$.
Then for any $0\leq i,j\leq k-1$, we have
$$\langle \bm{a}^{qi+j},\bm{v}^{q+1}\rangle_E=\sum_{l=1}^rw^{i_l(qi+j)}v_l^{q+1}\sum_{\nu=0}^{m-1}\theta^{\nu(qi+j+q+1-\frac{q+1}{h})},$$
thus
\begin{equation}\label{eq qiuhe 3}
\langle \bm{a}^{qi+j},\bm{v}^{q+1}\rangle_E=\begin{cases}
0; & if\ m\nmid (qi+j+q+1-\frac{q+1}{h}),\\
m\sum_{l=1}^rw^{i_l(qi+j)}v_l^{q+1}; & if\  m\mid (qi+j+q+1-\frac{q+1}{h}).
\end{cases}
\end{equation}

\begin{theorem}\label{th2}
Let $n=r\frac{q^2-1}{h}$,
where $h\mid (q+1)$, $h=2s$, $r=2t+3$ and $0\leq t\leq s-2$.
Then for any $1\leq k\leq (\frac{r+h+1}{2})\frac{q+1}{h}-2$,
there exists an $[[n,n-2k,k+1]]_q$-quantum MDS code.
\end{theorem}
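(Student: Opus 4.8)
The plan is to follow the template of the proof of Theorem~\ref{th1}, but with the code locators $\bm{a}$ and column multipliers $\bm{v}$ of the second construction, and to apply Lemma~\ref{lem youjie 1} directly: the coefficient matrix that arises is $(r-1)\times r$, which is exactly the shape that lemma requires (so there is no need for the two-step argument via Lemmas~\ref{lem youjie xin} and~\ref{lem youjie xin 2} used for Theorem~\ref{th1}). Fix $k$ in the stated range and set $\delta=\frac{q+1}{h}$ (an integer, since $h\mid q+1$), $\alpha=w^{m}$ (a primitive $h$-th root of unity, since $m=\frac{q^2-1}{h}$ has order $h$), and $\zeta_l=\alpha^{i_l}$ for $1\le l\le r$. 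By Eq.~(\ref{eq qiuhe 3}), $\langle\bm{a}^{qi+j},\bm{v}^{q+1}\rangle_E$ vanishes automatically unless $m\mid(qi+j+q+1-\delta)$; and by Lemma~\ref{lem zhenchu 2} --- applicable because $\frac{r+h+1}{2}=s+t+2$ and $0\le t\le s-2$ --- this divisibility forces the quotient $u$ to lie in $I=\{s-t,\,s-t+1,\,\dots,\,s+t+1\}$, a set of $2t+2=r-1$ consecutive integers contained in $\{0,1,\dots,2s-1\}$. For such $u$ one has $qi+j=um-(q+1)+\delta$, hence $w^{i_l(qi+j)}=c_l\,\zeta_l^{\,u}$ with $c_l:=w^{\,i_l\delta(1-h)}\in\F_{q^2}^*$ \emph{independent of $u$}, so $\langle\bm{a}^{qi+j},\bm{v}^{q+1}\rangle_E=m\sum_{l=1}^{r}c_l\,\zeta_l^{\,u}\,v_l^{q+1}$. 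Therefore it suffices to find $\bm{x}=(x_1,\dots,x_r)\in(\F_q^*)^{r}$ with $M\bm{x}^T=\bm{0}^T$, where $M=\big(c_l\zeta_l^{\,u}\big)_{u\in I,\ 1\le l\le r}$, and then to pick $v_l\in\F_{q^2}^*$ with $v_l^{q+1}=x_l$ (the norm map $\F_{q^2}^*\to\F_q^*$ is surjective). Note also that the locators are genuinely distinct, since $r=2t+3\le 2s-1<h$.

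It then remains to verify the two hypotheses of Lemma~\ref{lem youjie 1} for $M$. For condition~(1): any $r-1$ of the $r$ columns of $M$ form a square matrix whose determinant, after pulling the nonzero scalar $c_l\zeta_l^{\,s-t}$ out of each of those columns, is a Vandermonde determinant in the corresponding $\zeta_l$; as the $i_l$ are pairwise distinct modulo $h$ and $r\le 2s-1<h$, the $\zeta_l$ are pairwise distinct, the determinant is nonzero, and so any $r-1$ columns of $M$ are linearly independent. For condition~(2): from $\alpha^{q+1}=1$ one gets $\zeta_l^{\,q}=\zeta_l^{-1}$, and --- the key identity --- $c_l^{\,q}/c_l=c_l^{\,q-1}=w^{\,i_l\delta(1-h)(q-1)}=w^{\,i_l m(1-h)}=\alpha^{i_l}=\zeta_l$, using $\delta(q-1)=m$ and $\alpha^{h}=1$. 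Hence the $l$-th column of $M^{(q)}$ has $u$-entry $c_l^{\,q}\zeta_l^{\,qu}=c_l\zeta_l\cdot\zeta_l^{-u}=c_l\zeta_l^{\,1-u}$; that is, it is the $l$-th column of $M$ read at index $1-u$. Since $u\mapsto 1-u$ permutes $I$ modulo $h=2s$ --- here one uses that $I$ is an interval of length $2t+2\le 2s$ lying in $\{0,\dots,2s-1\}$ and symmetric modulo $2s$ about $\frac{2s+1}{2}$, which is exactly where the hypothesis $t\le s-2$ is needed --- the same permutation of the $r-1$ rows carries $M$ to $M^{(q)}$, so $M^{(q)}$ is row equivalent to $M$.

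By Lemma~\ref{lem youjie 1} there is then a solution $\bm{x}\in(\F_q^*)^{r}$, and with the corresponding $\bm{v}$ the reduction above gives $\langle\bm{a}^{qi+j},\bm{v}^{q+1}\rangle_E=0$ for all $0\le i,j\le k-1$. By Lemma~\ref{lem (a,v)=0} the code $GRS_k(\bm{a},\bm{v})$ is Hermitian self-orthogonal, and being $[n,k,n-k+1]_{q^2}$ MDS, Lemma~\ref{Lem zhuyao} yields an $[[n,n-2k,k+1]]_q$ quantum code, which meets the quantum Singleton bound~(\ref{eq singleton bound}) and is therefore MDS. I expect the main obstacle to be the bookkeeping in condition~(2): recognizing that the extra scalar $c_l$ produced by the ``twisted'' column multipliers of the second construction contributes under Frobenius precisely the missing factor $\zeta_l$, and that the resulting index shift $u\mapsto 1-u$ stabilizes $I$ --- this is what pins the construction to $r=2t+3$, $h=2s$, $0\le t\le s-2$. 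The remaining steps are routine Vandermonde algebra and parallel the proof of Theorem~\ref{th1}.
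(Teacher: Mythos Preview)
Your proof is correct and follows essentially the same approach as the paper's: your $c_l$ is precisely the paper's $\xi^{i_l}$ (with $\xi=w^{\delta(1-h)}=w^{\frac{q+1}{h}-q-1}$), your $\zeta_l$ is $\alpha^{i_l}$, and your matrix $M$ coincides with the paper's $(r-1)\times r$ matrix $A$. Both arguments verify the hypotheses of Lemma~\ref{lem youjie 1} via the same Vandermonde computation for condition~(1) and the same Frobenius identity for condition~(2) --- the paper writes it as $\xi^q=\alpha\xi$, $\alpha^q=\alpha^{-1}$ giving row $j\leftrightarrow$ row $r-2-j$, which is exactly your $u\mapsto 1-u$ permutation of $I$ modulo $2s$.
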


\begin{proof}
Keep the notations as above.
Denote $\alpha=w^m$ and $\xi=w^{\frac{q+1}{h}-q-1}$.
Let $a=s-t$ and let
$$A=\begin{pmatrix}
 \alpha^{i_1a}\xi^{i_1} &  \alpha^{i_2a}\xi^{i_2} & \dots & \alpha^{i_ra}\xi^{i_r}\\
 \alpha^{i_1(a+1)}\xi^{i_1} &  \alpha^{i_2(a+1)}\xi^{i_2}  &\dots & \alpha^{i_r(a+1)}\xi^{i_r}\\
 \vdots &  \vdots & \ddots & \vdots\\
 \alpha^{i_1(a+r-2)}\xi^{i_1} & \alpha^{i_2(a+r-2)}\xi^{i_2} & \dots & \alpha^{i_r(a+r-2)}\xi^{i_r}\\
\end{pmatrix}$$
be an $(r-1)\times r$ matrix over $\F_{q^2}$.
Note that $\xi^q=\alpha\xi$, $\alpha^q=\alpha^{-1}$ and $\alpha^{2s}=1$.
Then we have
\[\begin{split} \alpha^{i_l(a+j)q}\xi^{i_lq}&=(\alpha^{s+t+1-j}\xi)^{i_l}\\
                           &=\alpha^{i_l(a+r-2-j)}\xi^{i_l},\\
	\end{split}\]
for any $1\leq l\leq r$ and $0\leq j\leq r-2$.
It follows that $A$ is row equivalent to $A^{(q)}$.
Let $A_{i}\ (1\leq i\leq r)$ be the $(r-1)\times (r-1)$ matrix obtained from $A$
by deleting the $i$-th column.
Note that $\det(A_{i})$ is equal to a nonzero constant times a Vandermonde determinant.
So $\det(A_{i})\neq 0$.
By Lemma \ref{lem youjie 1},
the equation $A\bm{u}^T=\bm{0}^T$
has a solution $\bm{u}=(u_1,u_2,\dots,u_{r})\in (\F_q^*)^{r}$.
Let $v_l^{q+1}=u_l$ for $1\leq l\leq r$.
Since $1\leq k\leq (s+t+2)\frac{q+1}{2s}-2$,
by Lemma \ref{lem zhenchu 2}, $qi+j+q+1-\frac{q+1}{h}=um$ if and only if $u\in \{s-t,s-t+1,\dots,s+t+1\}$.
Then by Eq. (\ref{eq qiuhe 3}), when $qi+j+q+1-\frac{q+1}{h}=u m$ for $s-t\leq u\leq s+t+1$,
we have
$$\langle \bm{a}^{qi+j},\bm{v}^{q+1}\rangle_E=m\sum_{l=1}^{r}w^{i_l(u m+\frac{q+1}{h}-q-1)}v_l^{q+1}=m\sum_{l=1}^{r}\alpha^{i_lu}\xi^{i_l}u_l=0.$$
Then $\langle \bm{a}^{qi+j},\bm{v}^{q+1}\rangle_E=0$, for all $0\leq i,j\leq k-1$.
Therefore, by Lemma \ref{lem (a,v)=0},
the code $GRS_k(\bm{a},\bm{v})$ is Hermitian self-orthogonal.
Then the desired result follows from Lemma \ref{Lem zhuyao}.
This completes the proof.
\end{proof}

\begin{example}
In this example,
we give some quantum MDS codes from Theorem \ref{th2}.
\begin{itemize}
\item[(1)]
When $8\mid (q+1)$, let $(r,h)=(5,8)$ in Theorem \ref{th2}.
Then for any $1\leq k\leq \frac{7}{8}(q+1)-2$, there exists
a $[[\frac{5}{8}(q^2-1),\frac{5}{8}(q^2-1)-2k,k+1]]_q$-quantum MDS code;
\item[(2)]
When $6\mid (q+1)$, let $(r,h)=(5,6)$ in Theorem \ref{th2}.
Then for any $1\leq k\leq q-1$, there exists
a $[[\frac{5}{6}(q^2-1),\frac{5}{6}(q^2-1)-2k,k+1]]_q$-quantum MDS code.
\end{itemize}
\end{example}

\subsection{Third construction of quantum MDS codes}

Let $q$ be an odd prime power.
Suppose $h=2s$ and $h\mid (q-1)$. Put
$$\bm{a}=(w^{2i_1}, w^{2i_1}\theta,\dots,w^{2i_1}\theta^{m-1},\dots,w^{2{i_r}},w^{2{i_r}}\theta,\dots,w^{2{i_r}}\theta^{m-1})\in \F_{q^2}^{r\frac{q^2-1}{h}},$$
where $i_1,i_2,\dots, i_r$ are distinct modulo $s$,
and
$$\bm{v}=(v_1,v_1w^{s},\dots,v_1w^{(m-1)s},\dots,v_r,v_rw^{s}\dots,v_rw^{(m-1)s}),$$
where $v_1,\dots,v_r\in \F_{q^2}^*$.
Then for any $0\leq i,j\leq k-1$, we have
$$\langle \bm{a}^{qi+j},\bm{v}^{q+1}\rangle_E=\sum_{l=1}^rw^{2i_l(qi+j)}v_l^{q+1}\sum_{\nu=0}^{m-1}\theta^{\nu(qi+j+\frac{q+1}{2})},$$
thus
\begin{equation}\label{eq qiuhe 4}
\langle \bm{a}^{qi+j},\bm{v}^{q+1}\rangle_E=\begin{cases}
0; & if\ m\nmid (qi+j+\frac{q+1}{2}),\\
m\sum_{l=1}^rw^{2i_l(qi+j)}v_l^{q+1}; & if\  m\mid (qi+j+\frac{q+1}{2}).
\end{cases}
\end{equation}

\begin{theorem}\label{th3}
Let $n=r\frac{q^2-1}{h}$, where
$h\mid (q-1)$, $h=2s$ and $1\leq r\leq s$.
Then for any $1\leq k\leq (s+r)\frac{q-1}{2s}$,
there exists an $[[n,n-2k,k+1]]_q$-quantum MDS code.
\end{theorem}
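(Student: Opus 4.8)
The plan is to follow the template of the proof of Theorem~\ref{th2}: use Lemma~\ref{lem (a,v)=0} together with Eq.~(\ref{eq qiuhe 4}) to reduce the Hermitian self-orthogonality of $GRS_k(\bm{a},\bm{v})$ to the existence of a solution over $\F_q^*$ of a small homogeneous linear system, and then produce that solution via Lemma~\ref{lem youjie 1}.

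First I would set $\beta=w^{2m}$ and $\gamma=w^{q+1}$. Since $m=\frac{q^2-1}{2s}$, the element $\beta$ has multiplicative order $s$ and $\gamma$ has order $q-1$ (so $\gamma$ generates $\F_q^*$); moreover, because $2s\mid(q-1)$ we have $q\equiv 1\pmod s$, hence $\beta^q=\beta$ and $\gamma^q=\gamma$, i.e.\ $\beta,\gamma\in\F_q$. Next, apply Lemma~\ref{lem zhenchu 3} with $t=r$ --- the hypothesis $1\le r\le s$ is precisely what that lemma requires --- to conclude that for $0\le i,j\le k-1$ one has $qi+j+\frac{q+1}{2}=um$ exactly when $u\in\{1,\dots,r-1\}\cup\{s+1,\dots,s+r-1\}$. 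By Eq.~(\ref{eq qiuhe 4}), $\langle\bm{a}^{qi+j},\bm{v}^{q+1}\rangle_E$ vanishes automatically unless $qi+j=um-\frac{q+1}{2}$ for such a $u$, in which case $w^{2i_l(qi+j)}=\beta^{i_l u}\gamma^{-i_l}$ and the inner product equals $m\sum_{l=1}^{r}\beta^{i_l u}\gamma^{-i_l}v_l^{q+1}$.

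The key observation is that $\beta^{i_l u}$ depends only on $u\bmod s$, and since $r\le s$ the set $\{s+1,\dots,s+r-1\}$ is just $\{1,\dots,r-1\}$ shifted by $s$; hence it is enough to make $\sum_{l=1}^{r}\beta^{i_l u}\gamma^{-i_l}v_l^{q+1}=0$ hold for $u=1,\dots,r-1$. I would therefore introduce the $(r-1)\times r$ matrix $A$ over $\F_{q^2}$ with $(u,l)$-entry $\beta^{i_l u}\gamma^{-i_l}$, $1\le u\le r-1$, $1\le l\le r$, and check the two hypotheses of Lemma~\ref{lem youjie 1}: pulling the nonzero scalar $\beta^{i_l}\gamma^{-i_l}$ out of the $l$-th column turns any $(r-1)\times(r-1)$ minor into a Vandermonde determinant in the values $\beta^{i_1},\dots,\beta^{i_r}$, which are pairwise distinct because the $i_l$ are distinct modulo $s$ and $\beta$ has order $s$, so every $r-1$ columns of $A$ are linearly independent; and $A^{(q)}=A$ since $\beta,\gamma\in\F_q$, so $A^{(q)}$ is trivially row equivalent to $A$. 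Lemma~\ref{lem youjie 1} then gives $\bm{u}=(u_1,\dots,u_r)\in(\F_q^*)^r$ with $A\bm{u}^T=\bm{0}^T$ (for $r=1$ there is no condition at all). Choosing $v_l\in\F_{q^2}^*$ with $v_l^{q+1}=u_l$, which is possible because the norm map $\F_{q^2}^*\to\F_q^*$ is surjective, forces $\langle\bm{a}^{qi+j},\bm{v}^{q+1}\rangle_E=0$ for all $0\le i,j\le k-1$. After noting that the entries of $\bm{a}$ are distinct (taking $i_l\in\{0,\dots,s-1\}$, the exponents $2(i_l+s\nu)$ with $0\le\nu\le m-1$ are distinct elements of $\{0,\dots,q^2-2\}$) and that $1\le k\le n$, Lemma~\ref{lem (a,v)=0} shows $GRS_k(\bm{a},\bm{v})$ is a Hermitian self-orthogonal $[n,k,n-k+1]_{q^2}$ code, and Lemma~\ref{Lem zhuyao} delivers the desired $[[n,n-2k,k+1]]_q$ quantum MDS code.

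I do not expect a serious obstacle. The only points requiring care are (i) seeing that the $2(r-1)$ a priori constraints collapse to $r-1$ by the $s$-periodicity of $\beta$, and (ii) recognizing that ``$i_l$ distinct modulo $s$'' is exactly the condition making all the relevant Vandermonde minors nonvanishing, while ``$2s\mid(q-1)$'' is what yields $A^{(q)}=A$; the rest is the same routine bookkeeping as in the proofs of Theorems~\ref{th1} and~\ref{th2}.
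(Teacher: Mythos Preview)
Your proposal is correct and follows essentially the same route as the paper's proof: both reduce via Lemma~\ref{lem zhenchu 3} and Eq.~(\ref{eq qiuhe 4}) to an $(r-1)\times r$ system over $\F_q$ with Vandermonde-type minors, then invoke Lemma~\ref{lem youjie 1}. The only cosmetic difference is that the paper first writes the full $2(r-1)\times r$ system and then observes it is row equivalent to the smaller one (since $\alpha^s=1$), whereas you note the $s$-periodicity of $\beta^{i_l u}$ up front.
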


\begin{proof}
Keep the notations as above.
Denote $\alpha=w^{2m}$ and $\xi=w^{-q-1}$. Then $\alpha, \xi\in \F_q$.
Let
$$A=\begin{pmatrix}
 \alpha^{i_1} \xi^{i_1} &  \alpha^{i_2}\xi^{i_2} & \dots & \alpha^{i_r}\xi^{i_r}\\
 \vdots &  \vdots & \ddots & \vdots\\
 \alpha^{i_1(r-1)}\xi^{i_1} & \alpha^{i_2(r-1)}\xi^{i_2} & \dots & \alpha^{i_r(r-1)}\xi^{i_r}\\
  \alpha^{i_1(s+1)} \xi^{i_1} &  \alpha^{i_2(s+1)}\xi^{i_2} & \dots & \alpha^{i_r(s+1)}\xi^{i_r}\\
 \vdots &  \vdots & \ddots & \vdots\\
 \alpha^{i_1(s+r-1)}\xi^{i_1} & \alpha^{i_2(s+r-1)}\xi^{i_2} & \dots & \alpha^{i_r(s+r-1)}\xi^{i_r}\\
\end{pmatrix}$$
be an $2(r-1)\times r$ matrix over $\F_{q}$.
Let
$$B=\begin{pmatrix}
 \alpha^{i_1} \xi^{i_1} &  \alpha^{i_2}\xi^{i_2} & \dots & \alpha^{i_r}\xi^{i_r}\\
 \alpha^{2i_1} \xi^{i_1} &  \alpha^{2i_2}\xi^{i_2} & \dots & \alpha^{2i_r}\xi^{i_r}\\
 \vdots &  \vdots & \ddots & \vdots\\
 \alpha^{i_1(r-1)}\xi^{i_1} & \alpha^{i_2(r-1)}\xi^{i_2} & \dots & \alpha^{i_r(r-1)}\xi^{i_r}\\
\end{pmatrix}$$
be an $(r-1)\times r$ matrix over $\F_{q}$.
Since $\alpha^s=1$,
then the matrix $A$ is row equivalent to the matrix $B$.
Hence, equation $A\bm{u}^T=\bm{0}^T$ has the same solutions as equation $B\bm{u}^T=\bm{0}^T$.
Let $B_{i}\ (1\leq i\leq r)$ be the $(r-1)\times (r-1)$ matrix obtained from $B$
by deleting the $i$-th column.
Note that $\det(B_{i})$ is equal to a nonzero constant times a Vandermonde determinant.
So $\det(B_{i})\neq 0$.
By Lemma \ref{lem youjie 1}, equation $B\bm{u}^T=\bm{0}^T$
has a solution $\bm{u}=(u_1,u_2,\dots,u_{r})\in (\F_q^*)^{r}$.
Then equation $A\bm{u}^T=\bm{0}^T$
has a solution $\bm{u}=(u_1,u_2,\dots,u_{r})\in (\F_q^*)^{r}$.
Let $v_l^{q+1}=u_l$ for $1\leq l\leq r$.
Since $1\leq k\leq (s+r)\frac{q-1}{2s}$,
by Lemma \ref{lem zhenchu 3}, $qi+j+\frac{q+1}{2}=um$ if and only if $u\in \{1,2,\dots,r-1,s+1,s+2,\dots,s+r-1\}$.
Then by Eq. (\ref{eq qiuhe 4}), when $qi+j+\frac{q+1}{2}=u m$ for $u\in \{1,2,\dots,r-1,s+1,s+2,\dots,s+r-1\}$,
we have
$$\langle \bm{a}^{qi+j},\bm{v}^{q+1}\rangle_E=m\sum_{l=1}^{r}w^{2i_l(um-\frac{q+1}{2}) }\xi^{i_l}v_l^{q+1}=m\sum_{l=1}^{r}\alpha^{i_lu }\xi^{i_l}u_l=0.$$
Then $\langle \bm{a}^{qi+j},\bm{v}^{q+1}\rangle_E=0$, for all $0\leq i,j\leq k-1$.
Therefore, by Lemma \ref{lem (a,v)=0},
the code $GRS_k(\bm{a},\bm{v})$ is Hermitian self-orthogonal.
Then the desired result follows from Lemma \ref{Lem zhuyao}.
This completes the proof.
\end{proof}

\begin{example}
In this example,
we give some quantum MDS codes from Theorem \ref{th3}.
\begin{itemize}
\item[(1)]
When $12\mid (q-1)$, let $(r,h)=(5,12)$ in Theorem \ref{th3}.
Then for any $1\leq k\leq \frac{11}{12}(q-1)$, there exists
a $[[\frac{5}{12}(q^2-1),\frac{5}{12}(q^2-1)-2k,k+1]]_q$-quantum MDS code;
\item[(2)]
When $14\mid (q-1)$, let $(r,h)=(5,14)$ in Theorem \ref{th3}.
Then for any $1\leq k\leq \frac{6}{7}(q-1)$, there exists
a $[[\frac{5}{14}(q^2-1),\frac{5}{14}(q^2-1)-2k,k+1]]_q$-quantum MDS code.
\end{itemize}
\end{example}

\subsection{Fourth construction of quantum MDS codes}

Let $q$ be an odd prime power.
Suppose $h=2s$, $h\mid (q-1)$ and $r=r_1+r_2$. Put
\[\begin{split} \bm{a}=(w^{2i_1}, w^{2i_1}&\theta,\dots,w^{2i_1}\theta^{m-1},\dots,w^{2i_{r_1}},w^{2i_{r_1}}\theta,\dots,w^{2i_{r_1}}\theta^{m-1},\\
                           &w^{2j_1+1},w^{2j_1+1}\theta,\dots,w^{2j_1+1}\theta^{m-1},\dots,w^{2j_{r_2}+1},w^{2j_{r_2}+1}\theta,\dots,w^{2j_{r_2}+1}\theta^{m-1})\in \F_{q^2}^{r\frac{q^2-1}{h}},\\
	\end{split}\]
where $s<r<2s$, $i_1,i_2,\dots, i_{r_1}$ are distinct modulo $s$ and $j_1,j_2,\dots, j_{r_2}$ are distinct modulo $s$.
Set
$$\bm{v}=(v_1,v_1w^{s},\dots,v_1w^{(m-1)s},\dots,v_{r},v_{r}w^{s}\dots,v_{r}w^{(m-1)s}),$$
where $v_1,\dots,v_{r}\in \F_{q^2}^*$.
Then for any $0\leq i,j\leq k-1$, we have
$$\langle\bm{a}^{qi+j},\bm{v}^{q+1}\rangle_E=[\sum_{l_1=1}^{r_1}w^{2i_{l_1}(qi+j)}v_{l_1}^{q+1}+\sum_{l_2=1}^{r_2}w^{(2j_{l_2}+1)(qi+j)}v_{r_1+l_2}^{q+1}]\sum_{\nu=0}^{m-1}\theta^{\nu(qi+j+\frac{q+1}{2})},$$
thus
\begin{equation}\label{eq qiuhe 5}
\langle \bm{a}^{qi+j},\bm{v}^{q+1}\rangle_E=\begin{cases}
0; & if\ m\nmid (qi+j+\frac{q+1}{2}),\\
m\sum\limits_{l_1=1}^{r_1}w^{2i_{l_1}(qi+j)}v_{l_1}^{q+1}+m\sum\limits_{l_2=1}^{r_2}w^{(2j_{l_2}+1)(qi+j)}v_{r_1+l_2}^{q+1}; & if\  m\mid (qi+j+\frac{q+1}{2}).
\end{cases}
\end{equation}

\begin{theorem}\label{th4}
Let $n=r\frac{q^2-1}{h}$, where
$h\mid (q-1)$, $h=2s$ and $s< r< 2s$.
Then for any $1\leq k\leq \lfloor\frac{h+r}{2}\rfloor\frac{q-1}{h}$,
there exists an $[[n,n-2k,k+1]]_q$-quantum MDS code.
\end{theorem}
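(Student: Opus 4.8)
The plan is to mimic the proof of Theorem \ref{th3}, but now the code locators split into an ``even'' block $w^{2i_1},\dots,w^{2i_{r_1}}$ and an ``odd'' block $w^{2j_1+1},\dots,w^{2j_{r_2}+1}$, so the associated coefficient matrix becomes a block matrix built from two generalized Vandermonde pieces. By Lemma \ref{lem zhenchu 3} applied with $t=r$ (legal since $1\le r\le s$ fails, but the range $s<r<2s$ means we must instead use the version of Lemma \ref{lem zhenchu 3} with the appropriate bound on $k$, namely $k\le(s+t)\frac{q-1}{2s}$ with $t=\lfloor\frac{h+r}{2}\rfloor-s$, or re-derive the divisibility set directly), the congruence $qi+j+\frac{q+1}{2}=um$ forces $u$ to lie in a symmetric index set, say $U=\{1,\dots,\rho-1,\,s+1,\dots,s+\rho-1\}$ with $\rho=\lfloor\frac{h+r}{2}\rfloor-s+1$ or similar; I would first nail down this set by the same elementary division-with-remainder argument used in Lemma \ref{lem zhenchu 3}. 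Then Eq.~(\ref{eq qiuhe 5}) shows that Hermitian self-orthogonality of $GRS_k(\bm a,\bm v)$ reduces to solving, over $\F_q^*$, the homogeneous system whose $u$-th row ($u\in U$) is
$$\sum_{l_1=1}^{r_1}\alpha^{i_{l_1}u}\xi^{i_{l_1}}u_{l_1}+\sum_{l_2=1}^{r_2}\beta^{j_{l_2}u}\eta^{j_{l_2}}u_{r_1+l_2}=0,$$
where $\alpha=w^{2m}$, $\beta=w^{2m}$ as well but the odd block carries an extra factor $w^{um}$ from the exponent $2j_{l_2}+1$ times $(um-\frac{q+1}{2})$; I would set $\xi=w^{-q-1}$ as before and track the extra unit carefully so that every entry lands in $\F_q$.

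The key structural point is that after absorbing the column scalars $\xi^{i_{l_1}},\eta^{j_{l_2}}$, the coefficient matrix $A$ is, up to the cyclic reduction $\alpha^s=1$ that collapses the rows indexed by $s+1,\dots,s+\rho-1$ onto those indexed by $1,\dots,\rho-1$ (exactly as $A\leftrightarrow B$ in Theorem \ref{th3}), row-equivalent to an $(r-1)\times r$ matrix $B$ whose columns are $(\gamma_l,\gamma_l^2,\dots,\gamma_l^{r-1})^T$ times a nonzero scalar, with $\gamma_l$ running over the $r$ distinct values $\alpha^{i_{l_1}}$ and $\beta^{j_{l_2}}\cdot(\text{the odd twist})$. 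The crucial claim is that these $r$ values are pairwise distinct: the $i$'s are distinct mod $s$ and the $j$'s are distinct mod $s$, and the odd twist separates the even block from the odd block because a power of $w$ with even exponent cannot equal one with odd exponent when $h=2s\mid q-1$ forces $w^{2m}$ to be a primitive $s$-th root of unity while the twist is a genuine non-square. Once distinctness is established, every maximal minor $\det(B_i)$ is a nonzero constant times a Vandermonde determinant, so any $r-1$ columns of $B$ are linearly independent, and $B^{(q)}$ is row-equivalent to $B$ by the same $\alpha^q=\alpha^{-1}$, $\xi^q=\alpha\xi$ bookkeeping as in Theorems \ref{th2} and \ref{th3}. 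Lemma \ref{lem youjie 1} then yields a solution $\bm u\in(\F_q^*)^r$; setting $v_l^{q+1}=u_l$ and invoking Lemma \ref{lem (a,v)=0} makes $GRS_k(\bm a,\bm v)$ Hermitian self-orthogonal, whence Lemma \ref{Lem zhuyao} gives the $[[n,n-2k,k+1]]_q$ quantum MDS code.

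I expect the main obstacle to be the bookkeeping around the odd block: verifying that the extra factor $w^{(2j_{l_2}+1)(um-(q+1)/2)}$ really does collapse to $\beta^{j_{l_2}u}$ times a column-independent unit lying in $\F_q$, and—more delicately—that the resulting $r$ base points $\gamma_1,\dots,\gamma_r$ are genuinely distinct. Distinctness within each block is immediate from ``distinct mod $s$'' together with $\mathrm{ord}(\alpha)=s$, but the cross-block distinctness requires the quadratic-residue argument: an even power of $w$ equal to an odd power of $w$ would force a unit of even order to be a non-square, contradicting $2s\mid(q-1)$. The other place needing care is the exact value of $\lfloor\frac{h+r}{2}\rfloor$ as a function of the parity of $r$, since $s<r<2s$ allows $r$ even or odd; I would split into the two parities, in each case identifying the index set $U$ via Lemma \ref{lem zhenchu 3} (with $t$ chosen so that $(s+t)\frac{q-1}{2s}$ matches $\lfloor\frac{h+r}{2}\rfloor\frac{q-1}{h}$) and checking that $|U|=r-1$, which is precisely what makes $A$ have $r-1$ independent rows. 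Everything else is a routine transcription of the arguments already carried out for Theorems \ref{th2} and \ref{th3}.
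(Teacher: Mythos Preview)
Your proposal contains a genuine gap at the ``row-collapse'' step. You claim that the rows indexed by $s+1,\dots,s+\rho-1$ collapse onto those indexed by $1,\dots,\rho-1$ via ``$\alpha^s=1$'', exactly as in Theorem~\ref{th3}. But this is precisely where the odd block changes everything. In Theorem~\ref{th3} only even exponents $2i_l$ occur, so one may take $\alpha=w^{2m}$ of order $s$ and the collapse works. Here the odd exponents $2j_l+1$ force you to work with $\tilde\alpha=w^m$ of order $2s$, and then $\tilde\alpha^{\,s}=-1$: row $s+u$ equals row $u$ on even columns and \emph{minus} row $u$ on odd columns. The rows do not become redundant; instead the constraint matrix acquires the block form
\[
A=\begin{pmatrix}B & C\\ B & -C\end{pmatrix},
\]
with $B$ coming from the even block and $C$ from the odd block. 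No single $(r-1)\times r$ generalized Vandermonde arises.

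This also breaks your count $|U|=r-1$. For $r=2t$ even, Lemma~\ref{lem zhenchu 3} (with this $t$) gives $|U|=2(t-1)=r-2$, so the matrix has $r-2$ rows, not $r-1$, and Lemma~\ref{lem youjie 1} is inapplicable. What the paper does instead is row-reduce $A$ to the block-diagonal form
\[
\tilde A=\begin{pmatrix}B & 0\\ 0 & C\end{pmatrix},
\]
where $B$ and $C$ are separately $(t-1)\times t$ Vandermonde-type matrices. Deleting any single column from $\tilde A$ removes a column from exactly one of $B,C$, and the remaining $(t-1)\times(t-1)$ block is a nonzero Vandermonde minor; hence $\operatorname{rank}(\tilde A_i)=2t-2$ for every $i$. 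This is the hypothesis of Lemma~\ref{lem youjie xin} (applied over $\F_{q^2}$), yielding a solution in $(\F_{q^2}^*)^r$. One then checks $A^{(q)}$ is row-equivalent to $A$ via $\alpha^q=\alpha$, $\alpha^s=-1$, $\xi^q=-\xi$, and invokes Lemma~\ref{lem youjie xin 2} to descend to $(\F_q^*)^r$. Your single-Vandermonde plan cannot replace this block argument: with non-consecutive row indices $\{1,\dots,t-1,s+1,\dots,s+t-1\}$ there is no direct Vandermonde reason for every maximal minor to be nonzero, and indeed the sign twist is exactly what encodes the obstruction.
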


\begin{proof}
Keep the notations as above.
We only need to prove the case when $r$ is even, since the case when $r$ is odd is completely similar.
Denote $r_1=r_2=t$, $\alpha=w^{m}$ and $\xi=w^{-\frac{q+1}{2}}$.
Let
$$B=\begin{pmatrix}
  \alpha^{2i_1}\xi^{2i_1} &  \alpha^{2i_2}\xi^{2i_2} & \dots & \alpha^{2i_{t}}\xi^{2i_{t}} \\
  \alpha^{4i_1}\xi^{2i_1} &  \alpha^{4i_2}\xi^{2i_2} & \dots & \alpha^{4i_{t}}\xi^{2i_{t}} \\
  \vdots & \vdots &  \ddots & \vdots\\
  \alpha^{2i_1(t-1)}\xi^{2i_1} &  \alpha^{2i_2(t-1)}\xi^{2i_2} & \dots & \alpha^{2i_{t}(t-1)}\xi^{2i_{t}} \\
\end{pmatrix},$$
$$C=\begin{pmatrix}
  \alpha^{2j_1+1}\xi^{2j_1+1} &  \alpha^{2j_2+1}\xi^{2j_2+1} & \dots & \alpha^{2j_{t}+1}\xi^{2j_{t}+1} \\
  \alpha^{4j_1+2}\xi^{2j_1+1} &  \alpha^{4j_2+2}\xi^{2j_2+1} & \dots & \alpha^{4j_{t}+2}\xi^{2j_{t}+1} \\
  \vdots & \vdots &  \ddots & \vdots\\
  \alpha^{(2j_1+1)(t-1)}\xi^{2j_1+1} &  \alpha^{(2j_2+1)(t-1)}\xi^{2j_2+1} & \dots & \alpha^{(2j_{t}+1)(t-1)}\xi^{2j_{t}+1} \\
\end{pmatrix}
\quad and\quad A=\begin{pmatrix}
B& C\\
B & -C\\
\end{pmatrix}
$$
be the matrices over $\F_{q^2}$, respectively.
Let
$$\tilde{A}=\begin{pmatrix}
B& 0\\
0 & C\\
\end{pmatrix}$$
be an $(2t-2)\times 2t$ matrix over $\F_{q^2}$.
It is easy to see that
equation $A\bm{u}^T=\bm{0}^T$ has the same solutions as equation $\tilde{A}\bm{u}^T=\bm{0}^T$.
Let $\tilde{A}_i\ (1\leq i\leq r)$ be the $(2t-2)\times (2t-1)$ matrix obtained from $\tilde{A}$ by deleting the $i$-th column.
Then $rank(\tilde{A})=rank(\tilde{A}_1)=\dots=rank(\tilde{A}_{r})=2t-2$.
By Lemma \ref{lem youjie xin},
equation $\tilde{A}\bm{u}^T=\bm{0}^T$
has a solution $\bm{u}=(u_1,u_2,\dots,u_{r})\in (\F_{q^2}^*)^{r}$.
Hence, equation $A\bm{u}^T=\bm{0}^T$
has a solution $\bm{u}=(u_1,u_2,\dots,u_{r})\in (\F_{q^2}^*)^{r}$.
Note that $\alpha^q=\alpha$, $\alpha^s=-1$ and $\xi^{q}=-\xi$,
then for any $1\leq i\leq h$ and $1\leq j\leq t-1$, we have
$$\alpha^{ijq}\xi^{iq}=(-\alpha^{j}\xi)^i=\alpha^{i(s+j)}\xi^{i}.$$
It follows that $A$ is row equivalent to $A^{(q)}$.
By Lemma \ref{lem youjie xin 2},
the equation $A\bm{u}^T=\bm{0}^T$
has a solution $\bm{u}=(u_1,u_2,\dots,u_{r})\in (\F_q^*)^{r}$.
Let $v_l^{q+1}=u_l$ for $1\leq l\leq r$.
Since $1\leq k\leq (s+t)\frac{q-1}{2s}$,
by Lemma \ref{lem zhenchu 3}, $qi+j+\frac{q+1}{2}=um$ if and only if $u\in \{1,2,\dots,t-1,s+1,s+2,\dots,s+t-1\}$.
Then by Eq. (\ref{eq qiuhe 5}), when $qi+j+\frac{q+1}{2}=um$ for $u \in \{1,2,\dots,t-1,s+1,s+2,\dots,s+t-1\}$,
we have
$$\langle \bm{a}^{qi+j},\bm{v}^{q+1}\rangle_E=
m\sum\limits_{l_1=1}^{t}\alpha^{2i_{l_1}u}\xi^{2i_{l_1}}u_{l_1}+
m\sum\limits_{l_2=1}^{t}\alpha^{(2j_{l_2}+1)u}\xi^{2j_{l_2}+1}u_{t+l_2}=0.$$
Then $\langle \bm{a}^{qi+j},\bm{v}^{q+1}\rangle_E=0$, for all $0\leq i,j\leq k-1$.
Therefore, by Lemma \ref{lem (a,v)=0},
the code $GRS_k(\bm{a},\bm{v})$ is Hermitian self-orthogonal.
Then the desired result follows from Lemma \ref{Lem zhuyao}.
This completes the proof.
\end{proof}

\begin{example}
In this example,
we give some quantum MDS codes from Theorem \ref{th4}.
\begin{itemize}
\item[(1)]
When $14\mid (q-1)$, let $(r,h)=(9,14)$ in Theorem \ref{th4}.
Then for any $1\leq k\leq \frac{11}{14}(q-1)$, there exists
a $[[\frac{9}{14}(q^2-1),\frac{9}{14}(q^2-1)-2k,k+1]]_q$-quantum MDS code;
\item[(2)]
When $16\mid (q-1)$, let $(r,h)=(10,16)$ in Theorem \ref{th4}.
Then for any $1\leq k\leq \frac{13}{16}(q-1)$, there exists
a $[[\frac{5}{8}(q^2-1),\frac{5}{8}(q^2-1)-2k,k+1]]_q$-quantum MDS code.
\end{itemize}
\end{example}

\subsection{Fifth construction of quantum MDS codes}

In \cite{RefJ (2018)W. two}, another necessary and sufficient condition for GRS codes to be Hermitian self-orthogonal codes is given.

\begin{lemma}(\cite{RefJ (2018)W. two})\label{lem GRS g(x)}
A codeword $\bm{c}=(v_1f(a_1),v_2f(a_2),\dots,v_nf(a_n))$ of $GRS_k(\bm{a},\bm{v})$
is contained in $GRS_{k}(\bm{a},\bm{v})^{\bot_H}$ if and only if there exists a polynomial $g(x)$
with $\deg(g(x))\leq n-k-1$,
such that
$$(v_1^{q+1}f^q(a_1), v_2^{q+1}f^q(a_2),\dots,v_{n}^{q+1}f^q(a_n))=(u_1^{-1}g(a_1),u_2^{-1}g(a_2),\dots,u_n^{-1}g(a_n)),$$
where $u_i=\prod_{1\leq j\leq n,j\neq i}(a_i-a_j)$, for $1\leq i\leq n$.
\end{lemma}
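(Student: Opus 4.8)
The plan is to reduce the Hermitian orthogonality condition to a Euclidean one via the identity $\C^{\perp_H}=(\C^{(q)})^{\perp_E}$ recorded in Section~\ref{sec2}, and then combine it with the classical closed form of the Euclidean dual of a GRS code, carefully unwinding the two Frobenius twists that appear.

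First I would make $\C^{(q)}$ explicit, where $\C=\GRS_k(\bm{a},\bm{v})$. For a polynomial $g$ over $\F_{q^2}$ let $g^{(q)}$ be the polynomial obtained by raising every coefficient to its $q$-th power, so $g(a)^q=g^{(q)}(a^q)$ for all $a\in\F_{q^2}$. Since the $a_i$ are distinct and $x\mapsto x^q$ permutes $\F_{q^2}$, the $a_i^q$ are again distinct, and $g\mapsto g^{(q)}$ is a degree-preserving bijection on polynomials of degree $\le k-1$; hence $\C^{(q)}=\GRS_k(\bm{a}^q,\bm{v}^q)$ with $\bm{a}^q=(a_1^q,\dots,a_n^q)$, $\bm{v}^q=(v_1^q,\dots,v_n^q)$. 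Next I would invoke the standard description (which follows from the divided-difference identity $\sum_{i}b_i^{\,s}\big/\!\prod_{j\neq i}(b_i-b_j)=0$ for $0\le s\le n-2$) that $\GRS_k(\bm{b},\bm{w})^{\perp_E}=\GRS_{n-k}(\bm{b},\bm{w}')$ with $w_i'=\big(w_i\prod_{j\neq i}(b_i-b_j)\big)^{-1}$. Applying this with $\bm{b}=\bm{a}^q$, $\bm{w}=\bm{v}^q$ and $\prod_{j\neq i}(a_i^q-a_j^q)=u_i^{\,q}$ yields
\[
\C^{\perp_H}=(\C^{(q)})^{\perp_E}=\GRS_{n-k}\big(\bm{a}^q,\ ((v_iu_i)^{-q})_{i=1}^n\big).
\]

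With this in hand, $\bm{c}=(v_1f(a_1),\dots,v_nf(a_n))\in\C^{\perp_H}$ holds if and only if there is a polynomial $\phi$ with $\deg\phi\le n-k-1$ such that $v_if(a_i)=(v_iu_i)^{-q}\phi(a_i^q)$ for every $i$, equivalently $v_i^{q+1}u_i^{\,q}f(a_i)=\phi(a_i^q)$. Raising both sides (coordinatewise) to the $q$-th power and using $a_i^{q^2}=a_i$, $u_i^{\,q^2}=u_i$, $v_i^{\,q(q+1)}=v_i^{\,q+1}$ and $\phi(a_i^q)^q=\phi^{(q)}(a_i)$ turns this into $v_i^{q+1}u_if^q(a_i)=\phi^{(q)}(a_i)$, i.e.\ $v_i^{q+1}f^q(a_i)=u_i^{-1}g(a_i)$ with $g:=\phi^{(q)}$, which again has degree $\le n-k-1$. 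Since $\phi\mapsto\phi^{(q)}$ is a degree-preserving involution, every step is reversible, so this is precisely the stated condition, proving both directions.

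I expect the only real difficulty to be the bookkeeping: passing to $\C^{(q)}$ twists both the code locators and the column multipliers by $x\mapsto x^q$, and a second twist is needed at the end to trade $f(a_i)$ and $\phi(a_i^q)$ for $f^q(a_i)$ and $g(a_i)$; one must also check that the degree bound survives the substitution $g\leftrightarrow g^{(q)}$ and that $\GRS_{n-k}(\bm a^q,\cdot)$ really is the whole Euclidean dual (dimension count plus the divided-difference identity). No step is deep, but this is exactly where an exponent-level slip would hide.
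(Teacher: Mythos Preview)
The paper does not prove this lemma at all; it is quoted verbatim from \cite{RefJ (2018)W. two} and stated without argument. Your proof is correct and is the standard way to establish it: identify $\C^{\perp_H}$ with the Euclidean dual of $\C^{(q)}=\GRS_k(\bm a^q,\bm v^q)$, apply the classical description of $\GRS_k(\bm b,\bm w)^{\perp_E}$, and then undo the Frobenius twist to return from $\bm a^q$ to $\bm a$. The bookkeeping you flag---that $\prod_{j\ne i}(a_i^q-a_j^q)=u_i^{\,q}$, that $v_i^{q(q+1)}=v_i^{q+1}$ because $v_i^{q^2}=v_i$, and that $\phi\mapsto\phi^{(q)}$ preserves degree---is handled correctly.
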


\begin{theorem}\label{th5}
Let $q$ be a prime power and $n=\frac{r(q^2-1)}{h}$, where $h\mid (q-1)$ and $1\leq r\leq h$. Then for any $1\leq k\leq r\frac{q-1}{h}$,
there exists an $[[n,n-2k,k+1]]_q$-quantum MDS code.
\end{theorem}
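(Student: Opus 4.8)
The plan is to follow the now-familiar pattern of the previous four theorems: choose explicit code locators $\bm{a}$ and a structured family of column multipliers $\bm{v}$, reduce the Hermitian self-orthogonality condition to a finite system of homogeneous linear equations, and then invoke the machinery of Lemmas \ref{lem youjie 1}, \ref{lem youjie xin}, \ref{lem youjie xin 2} to produce a solution over $\F_q^*$. Concretely, with $h\mid(q-1)$, $\theta=w^h$ a primitive $m$-th root of unity where $m=\frac{q^2-1}{h}$, I would take $\bm{a}$ to consist of $r$ cosets $w^{e_l}\langle\theta\rangle$ for suitable exponents $e_1,\dots,e_r$ distinct modulo $h$ (or modulo $s$, in the even-$h$ split, mimicking Theorems \ref{th3} and \ref{th4}), so that $n=\frac{r(q^2-1)}{h}$, and take each column multiplier block to be a geometric progression of ratio $w^{?}$ chosen to shift the exponent appearing in the inner-coset sum $\sum_{\nu=0}^{m-1}\theta^{\nu(\cdot)}$. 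Since $h\mid(q-1)$ here, the relevant shift is by a multiple of $\frac{q-1}{h}$ rather than $\frac{q+1}{h}$, which is the structural difference from Theorem \ref{th2}.

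The key computation is the analogue of Eqs. (\ref{eq qiuhe 2})--(\ref{eq qiuhe 5}): the coset sum $\sum_{\nu=0}^{m-1}\theta^{\nu(qi+j+\delta)}$ vanishes unless $m\mid(qi+j+\delta)$, where $\delta$ is the offset built into $\bm{v}$. So the first step is to determine, via a ``$\zeta$-sous-ensemble'' lemma in the style of Lemmas \ref{lem zhenchu}--\ref{lem zhenchu 3}, exactly which values $u$ can occur with $qi+j+\delta=um$ when $0\le i,j\le k-1$ and $k\le r\frac{q-1}{h}$; I expect this to be a short interval (or a small union of intervals) of length on the order of $r$, determined by the constraints $i\le k-1$ and $j\le k-1$ combined with $\gcd(m,q)=1$. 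Each such $u$ then yields one linear equation $\sum_{l=1}^r \alpha^{e_l u}\xi^{e_l} u_l=0$ in the unknowns $u_l=v_l^{q+1}$, where $\alpha=w^m$ and $\xi$ absorbs the offset; the matrix of these equations is (a nonzero scaling of) a Vandermonde-type matrix in the distinct nodes $\alpha^{e_l}$, hence every maximal minor is nonzero. At this point one needs the ``$A^{(q)}$ row-equivalent to $A$'' symmetry: using $\alpha^q=\alpha^{\pm1}$, $\xi^q=$ (a power of $\alpha$ times $\pm\xi$), one checks that conjugating by $q$ permutes/rescales the rows, so Lemma \ref{lem youjie xin} (when the system is over $\F_q$ directly) or Lemma \ref{lem youjie xin 2} (after first getting an $\F_{q^2}^*$-solution) delivers $\bm{u}\in(\F_q^*)^r$.

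Having $\bm{u}\in(\F_q^*)^r$, I would set $v_l^{q+1}=u_l$ (possible since the norm map $\F_{q^2}^*\to\F_q^*$ is surjective), which makes $\langle\bm{a}^{qi+j},\bm{v}^{q+1}\rangle_E=0$ for every $(i,j)$ with $0\le i,j\le k-1$ by the case analysis above; then Lemma \ref{lem (a,v)=0} says $GRS_k(\bm{a},\bm{v})$ is Hermitian self-orthogonal, and Lemma \ref{Lem zhuyao} produces the $[[n,n-2k,k+1]]_q$ quantum MDS code. The mention of Lemma \ref{lem GRS g(x)} just before the theorem suggests the authors may instead phrase the self-orthogonality check through the existence of the ``defining polynomial'' $g(x)$ of degree $\le n-k-1$; if so, the step is to exhibit $g$ explicitly — it should be supported on the same coset structure, with its nonzero coefficients indexed by precisely the admissible $u$'s, and matching $v_i^{q+1}f^q(a_i)=u_i^{-1}g(a_i)$ reduces again to the same Vandermonde system. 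The main obstacle is the bookkeeping in the counting lemma: one must verify that the constraint $k\le r\frac{q-1}{h}$ is exactly what forces the admissible set of $u$'s to be small enough that the associated $(\#u)\times r$ matrix has $\#u\le r-1$ (so that a nontrivial kernel with all-nonzero coordinates can exist) and simultaneously that its rank equals the rank of every column-deleted submatrix; handling the parity split (even $h=2s$ with two sub-families of locators, as in Theorem \ref{th4}) without losing a factor in the range of $k$ is where the argument is most delicate.
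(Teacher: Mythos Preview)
Your primary plan does not reach the stated bound $k\le r\frac{q-1}{h}$. With $h\mid(q-1)$ one has $m=\frac{q^2-1}{h}=(q+1)\cdot\frac{q-1}{h}$, and taking $\bm{v}$ block-constant (or, more generally, a geometric progression with ratio $w^c$; for odd $h$ the constraint $h\mid c(q+1)$ forces $h\mid c$, which yields only shifts $\delta\equiv 0\pmod{q+1}$) leads to the divisibility condition $m\mid(qi+j+\delta)$ with $\delta$ a multiple of $q+1$. Writing $qi+j+\delta=um$ one finds $i=j=u\frac{q-1}{h}-\frac{\delta}{q+1}$, and for $k=r\frac{q-1}{h}$ exactly $r$ values of $u$ survive. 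The resulting $r\times r$ coefficient matrix in the unknowns $u_l=v_l^{q+1}$ is (up to a diagonal scaling) the Vandermonde $(\alpha^{i_l u})$ with distinct nodes $\alpha^{i_l}$, $\alpha=w^m$ of order $h$; since $h\mid(q-1)$ this matrix is even over $\F_q$, and it is invertible, so the only solution is $\bm{u}=\bm{0}$. Thus the machinery of Lemmas \ref{lem youjie 1}--\ref{lem youjie xin 2} cannot be invoked at the top of the range; your framework would at best yield $k\le (r-1)\frac{q-1}{h}$, strictly weaker than the theorem.

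The paper's proof sidesteps this entirely and does \emph{not} follow the pattern of Theorems \ref{th1}--\ref{th4}. It uses Lemma \ref{lem GRS g(x)} directly: one computes $u_i=\prod_{j\ne i}(a_i-a_j)=m\,a_i^{m-1}\prod_{s\ne\lambda}(w^{mi_\lambda}-w^{mi_s})$, observes that $(q+1)\mid m$ forces this product into $\F_q^*$ so that $a_iu_i\in\F_q^*$, and then takes $g(x)=(x^{q-1}+1)f^q(x)$ with $v_i^{q+1}=u_i^{-1}(a_i^{q-1}+1)$. The degree condition $\deg g=qk-1\le n-k-1$ is exactly $k\le r\frac{q-1}{h}$, and the identity $v_i^{q+1}f^q(a_i)=u_i^{-1}g(a_i)$ is checked pointwise. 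No linear system, no counting of admissible $u$'s, and crucially the $v_i$'s are \emph{not} constant on cosets --- they depend on $a_i$ individually through the factor $a_i^{q-1}+1$. Your remark that the $g(x)$ approach ``reduces again to the same Vandermonde system'' is therefore off the mark: the whole point of the polynomial criterion here is to bypass that system.
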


\begin{proof}
Denote $(\langle \theta\rangle)=(1,\theta,\dots,\theta^{m-1})$. Set
$$\bm{a}=(a_1,a_2,\dots,a_n)=(w^{i_1}(\langle \theta\rangle),w^{i_2}(\langle \theta\rangle),\dots,w^{i_r}(\langle \theta\rangle))\in \F_{q^2}^{r\frac{q^2-1}{h}},$$
where $i_1,i_2,\dots,i_r$ are distinct modulo $h$ and $n=rm$.
Note that $\prod_{0\leq l\leq m-1}(x-\theta^l)=x^{m}-1$ and
$\prod_{0\leq l\leq m-1,l\neq t}(x-\theta^l)=\sum_{i=0}^{m-1}x^i\theta^{t(m-1-i)}$,
it follows that
$$\prod_{0\leq l\leq m-1,l\neq t}(\theta^t-\theta^l)=\sum_{i=0}^{m-1}(\theta^t)^i\theta^{t(m-1-i)}=m\theta^{(m-1)t}.$$
Suppose $w^{i_\lambda}\langle \theta\rangle = A_\lambda$, then for each $a_i=w^{i_\lambda}\theta^t\in A_\lambda$, where $1\leq \lambda\leq r$ and $1\leq t\leq m$,
we have
\[\begin{split}
	       u_i&=\prod_{1\leq j\leq n, i\neq j}(a_i-a_j)\\
                  &=\prod_{x_\lambda\in A_\lambda,x_\lambda\neq a_i}(a_i-x_\lambda)\prod_{1\leq s\leq r, s\neq \lambda}\prod_{x_s\in A_s}(a_i-x_s)\\
                  &=\prod_{0\leq l\leq m-1,l\neq t}(w^{i_\lambda}\theta^t-w^{i_\lambda}\theta^l)
                  \prod_{1\leq s\leq r,s\neq \lambda}\prod_{0\leq l\leq m-1}(w^{i_\lambda}\theta^t-w^{i_s}\theta^l)\\
                   &=ma_i^{m-1}\prod_{1\leq s\leq r,s\neq \lambda}(w^{mi_\lambda}-w^{mi_s}).\\
	\end{split}\]
Note that $(q+1)\mid m$,
which implies that $\prod_{1\leq s\leq r,s\neq \lambda}(w^{mi_\lambda}-w^{mi_s})\in \F_q^*$, then $a_iu_i\in \F_q^*$.
It follows that $a_i^{-1}u_i^{-1}\in \F_q^*$,
then there exists $v_i'\in \F_{q^2}^*$ such that $(v_i')^{q+1}=a_i^{-1}u_i^{-1}$ for all $1\leq i\leq n$. Let $g(x)=(x^{q-1}+1)f^q(x)$.
If $1\leq k\leq r\frac{q-1}{h}$,
we have
$$\deg(g(x))\leq q-1+q(k-1)=qk-1\leq n-k-1.$$
It is easy to see that $a_i^{q}+a_i\in \F_{q}^*$,
then there is $b_i\in \F_{q^2}^*$ such that $b_i^{q+1}=a_i^{q}+a_i$. Let $v_i=v_i'b_i$,
we have
$$v_i^{q+1}f^q(a_i)=(v_i'b_i)^{q+1}f^q(a_i)=u_i^{-1}(a_i^{q-1}+1)f^q(a_i)=u_i^{-1}g(a_i),\ for\ all\ 1\leq i\leq n.$$
By Lemma \ref{lem GRS g(x)},
for any $1\leq k\leq r\frac{q-1}{h}$,
the code $GRS_k(\bm{a},\bm{v})$ is Hermitian self-orthogonal.
Then the desired result follows from Lemma \ref{Lem zhuyao}.
This completes the proof.
\end{proof}

\begin{remark}
\begin{itemize}
\item Compared with Theorems \ref{th3} and \ref{th4}, we add the case of $2\nmid h$;
\item The length of the code given in Theorem \ref{th5} appeared in \cite{RefJ (2017) X.Shi}, \cite{RefJ (2019)W. some} and \cite{RefJ (2021) application}. And the minimum distance of the codes we obtained can be larger than the previous conclusion by 1.
\end{itemize}
\end{remark}

\begin{example}
By taking $h=7$ in Theorem \ref{th5}.
Then when $7\mid (q-1)$, there exists a $[[\frac{r}{7}(q^2-1),\frac{r}{7}(q^2-1)-2k,k+1]]_q$-quantum MDS code
for any $1\leq k\leq \frac{r}{7}(q-1)$, where $1\leq r\leq 7$.
\end{example}

\section{Comparison}\label{sec5}

In this section, we make some detailed comparisons between our results and the previous results.

In Table \ref{tab:1}, we summarize the parameters of most previously known quantum MDS codes.
There are too many quantum MDS codes here, so we only list those whose minimum distance is greater than $q/2+1$ and are the best results.

In Table \ref{tab:3}, we list our constructions of quantum MDS codes.
From Tables \ref{tab:1} and \ref{tab:3}, it can be seen that the form of code length in Theorem \ref{th1} is the same as that in classes 10 and 11, the form of code length in Theorem \ref{th2} is the same as that in classes 12, 13, 14 and 15, and the form of code length in Theorems \ref{th3}, \ref{th4} and \ref{th5} is the same as that in classes 5, 6 and 7.
So we only need to compare these same length forms.
Now, we do some detailed comparisons in the following remark.

\begin{remark}(Comparison of Theorem \ref{th1})\label{remrak 1}
\begin{itemize}
\item[(1)] In [\cite{RefJ (2019)W. some} Theorem 4.3(i)] and [\cite{RefJ (2019)W. some} Theorem 5.3(i)], Fang et al. proved that there exists an $[[n=r\frac{q^2-1}{h}+1,n-2d,d]]$-quantum MDS code,
where $h\mid (q+1)$, $1\leq r\leq h$ and $2\leq d\leq (\lfloor\frac{h}{2}\rfloor+1)\frac{q+1}{h}$.
We can find that in Theorem \ref{th1}, the minimum distance can reach $(\frac{r+h-1}{2})\frac{q+1}{h}$.
Therefore, our results in Theorem \ref{th1} can reach a larger minimum distance;
\item[(2)]In [\cite{RefJ (2019)W. some} Theorem 4.3(ii)] and [\cite{RefJ (2019)W. some} Theorem 5.3(ii)], Fang et al. proved that there exists an $[[n=r\frac{q^2-1}{h}+1,n-2d,d]]$-quantum MDS code,
where $h\mid (q+1)$, $1\leq r< h$, $2\mid (r+h)$ and $2\leq d\leq (\frac{r+h}{2})\frac{q+1}{h}$.
We can find that in Theorem \ref{th1}, the condition is changed to $2\nmid (r+h)$.
Combining with them, the following corollary holds.
\end{itemize}
\end{remark}

\begin{corollary}
Let $n=r\frac{q^2-1}{h}+1$,
where $h\mid (q+1)$ and $1< r< \min\{q,h\}$.
Then for any $1\leq k\leq \lfloor \frac{r+h}{2}\rfloor\frac{q+1}{h}-1$,
there exists an $[[n,n-2k,k+1]]_q$-quantum MDS code.
\end{corollary}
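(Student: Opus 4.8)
The final statement to prove is the Corollary: for $n = r\frac{q^2-1}{h}+1$ with $h \mid (q+1)$ and $1 < r < \min\{q,h\}$, there is an $[[n, n-2k, k+1]]_q$ quantum MDS code for all $1 \le k \le \lfloor\frac{r+h}{2}\rfloor\frac{q+1}{h}-1$.

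My proof plan is to split on the parity of $r+h$. When $2\nmid (r+h)$, then $\lfloor\frac{r+h}{2}\rfloor = \frac{r+h-1}{2}$, so the claimed bound is exactly $k \le \frac{r+h-1}{2}\cdot\frac{q+1}{h}-1$, and this is precisely the conclusion of Theorem~\ref{th1} (whose hypotheses $h\mid(q+1)$, $h\ge 3$, $1<r<\min\{q,h\}$ are met — note $r<h$ together with $r>1$ forces $h\ge 3$). So in the odd case I would simply invoke Theorem~\ref{th1}. When $2\mid (r+h)$, then $\lfloor\frac{r+h}{2}\rfloor = \frac{r+h}{2}$, and the claimed bound is $k \le \frac{r+h}{2}\cdot\frac{q+1}{h}-1$; this is the content of the result of Fang et al. cited in Remark~\ref{remrak 1}(2), namely [\cite{RefJ (2019)W. some}, Theorem 4.3(ii)] and [Theorem 5.3(ii)], which gives an $[[n,n-2d,d]]$ quantum MDS code for $2\le d\le \frac{r+h}{2}\cdot\frac{q+1}{h}$ under the hypotheses $h\mid(q+1)$, $1\le r<h$, $2\mid(r+h)$. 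Setting $d = k+1$ translates this to $1 \le k \le \frac{r+h}{2}\cdot\frac{q+1}{h}-1$, exactly as required (the constraint $r<q$ is automatically available here).

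Thus the whole proof is a two-case dispatch: one case is Theorem~\ref{th1} of this paper, the other is the previously known theorem recalled in the remark. The only things to check carefully are that the index ranges match up after the shift $d\mapsto k+1$ (since the earlier results are stated in terms of the minimum distance $d$ of the quantum code, whereas here we use dimension-parameter $k$ with $d=k+1$), and that the side conditions on $r$ and $h$ in both source theorems are implied by $1<r<\min\{q,h\}$ and $h\mid(q+1)$. There is essentially no new computation.

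The main (and only) obstacle is bookkeeping: making sure the floor function is correctly resolved in each parity class and that the hypothesis $1<r<\min\{q,h\}$ is strong enough to feed both theorems — in particular that $r<h$ (needed for the even case of Fang et al.) and $h\ge 3$ (needed for Theorem~\ref{th1}, forced by $1<r<h$) both hold. Once those are verified, the corollary follows immediately by combining the two cases, and no appeal to the propagation rule or to the lemmas of Section~\ref{sec3} is needed beyond what Theorem~\ref{th1} already uses internally.
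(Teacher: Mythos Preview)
Your proposal is correct and matches the paper's own approach: the corollary is stated immediately after Remark~\ref{remrak 1}(2), where the paper explicitly says it follows by combining Theorem~\ref{th1} (the case $2\nmid(r+h)$) with Fang et al.'s result (the case $2\mid(r+h)$). Your parity split, the translation $d=k+1$, and the verification that $1<r<\min\{q,h\}$ forces $h\ge 3$ and $r<h$ are exactly the bookkeeping the paper is tacitly relying on.
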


\begin{remark}(Comparison of Theorem \ref{th2})\label{remrak 2}
\begin{itemize}
\item[(1)] In [\cite{RefJ (2019)W. some} Theorem 6.3], Fang et al. proved that there exists an $[[n=r\frac{q^2-1}{h},n-2d,d]]$-quantum MDS code,
where $2\mid h$, $h\mid (q+1)$, $1\leq r\leq h-1$, $2\nmid (r+h)$ and $2\leq d\leq (\frac{r+h-1}{2})\frac{q+1}{h}-1$.
We can find that in Theorem \ref{th2}, the minimum distance can reach $(\frac{r+h+1}{2})\frac{q+1}{h}-1$.
Therefore, our results in Theorem \ref{th2} can reach a larger minimum distance;
\item[(2)]In [\cite{RefJ (2017) X.Shi} Theorem 4.4], Shi et al. proved that there exists an $[[n=r\frac{q^2-1}{h},n-2d,d]]$-quantum MDS code,
where $2\nmid h$, $h\mid (q+1)$, $1\leq r\leq h-1$, $2\nmid (r+h)$ and $2\leq d\leq (\frac{r+h+1}{2})\frac{q+1}{h}$.
We can find that in Theorem \ref{th2}, the condition is changed to $2\mid h$.
Combining with them, the following corollary holds.
\end{itemize}
\end{remark}

\begin{corollary}\label{coro 1}
Let $n=r\frac{q^2-1}{h}$,
where $h\mid (q+1)$, $h\geq 3$ and $1\leq r <h$.
If $2\nmid (r+h)$,
then for any $1\leq k\leq (\frac{r+h+1}{2})\frac{q+1}{h}-2$,
there exists an $[[n,n-2k,k+1]]_q$-quantum MDS codes.
\end{corollary}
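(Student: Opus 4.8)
The plan is to obtain Corollary \ref{coro 1} by combining Theorem \ref{th2} with [\cite{RefJ (2017) X.Shi}, Theorem 4.4] and then applying the propagation rule (Lemma \ref{lem propagation rule}) to close the small gap between the two ranges of $k$. Fix $h \mid (q+1)$ with $h \geq 3$, $1 \leq r < h$ and $2 \nmid (r+h)$; we must produce an $[[n, n-2k, k+1]]_q$ quantum MDS code with $n = r\frac{q^2-1}{h}$ for every $1 \leq k \leq (\frac{r+h+1}{2})\frac{q+1}{h} - 2$. Split according to the parity of $h$.

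\textbf{Case $2 \nmid h$.} Here $h = 2s+1$ and $2 \nmid (r+h)$ forces $r$ even, so [\cite{RefJ (2017) X.Shi}, Theorem 4.4] directly gives, after translating its minimum-distance parameter $d$ into our $k+1 = d$ convention, the existence of an $[[n, n-2k, k+1]]_q$ code for all $2 \leq k+1 \leq (\frac{r+h+1}{2})\frac{q+1}{h}$, i.e. for all $1 \leq k \leq (\frac{r+h+1}{2})\frac{q+1}{h} - 1$, which already contains the claimed range $k \leq (\frac{r+h+1}{2})\frac{q+1}{h} - 2$. So nothing further is needed in this case.

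\textbf{Case $2 \mid h$.} Write $h = 2s$; then $2 \nmid (r+h)$ forces $r$ odd, say $r = 2t+3$ with $0 \leq t \leq s-2$ (the remaining odd value $r=1$, i.e. $t=-1$, must be checked separately, but for $r=1$ the target bound $k \leq (\frac{h+2}{2})\frac{q+1}{h}-2$ is small and can be covered by Theorem \ref{th5} or the $n\le q+1$ constructions, so I would note this edge case and treat $r \geq 3$ as the main content). Theorem \ref{th2} then supplies an $[[n, n-2k, k+1]]_q$ code for all $1 \leq k \leq (\frac{r+h+1}{2})\frac{q+1}{h} - 2$ — matching exactly the claimed range — and we are done without even invoking propagation. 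In fact the only place propagation could enter is if one insisted on stating a \emph{unified} bound that did not already match; since Theorem \ref{th2}'s bound is already $(\frac{r+h+1}{2})\frac{q+1}{h} - 2$, the corollary is immediate from Theorem \ref{th2} alone when $h$ is even and from [\cite{RefJ (2017) X.Shi}, Theorem 4.4] when $h$ is odd.

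The substance of the proof is therefore organizational rather than computational: verify that the parity constraint $2 \nmid (r+h)$ partitions cleanly into "$h$ odd, $r$ even" (handled by Shi et al.) and "$h$ even, $r$ odd" (handled by Theorem \ref{th2}), confirm that in each case the stated interval of $k$ is contained in the interval guaranteed by the relevant theorem, and reconcile the two papers' indexing conventions ($d$ versus $k+1$). The one genuine subtlety — the \emph{main obstacle} — is the boundary bookkeeping: making sure the endpoint $k = (\frac{r+h+1}{2})\frac{q+1}{h} - 2$ is actually attained (not merely approached) in both cited results, and handling the degenerate small values of $r$ (notably $r=1$) that fall outside the hypothesis $r = 2t+3,\ t\ge 0$ of Theorem \ref{th2}; these I would dispatch by a one-line appeal to the fact that quantum MDS codes of length $n \le q+1$ with all admissible minimum distances already exist.
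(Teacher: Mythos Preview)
Your approach is exactly the paper's: Corollary~\ref{coro 1} is stated immediately after Remark~\ref{remrak 2}, whose content is precisely ``$h$ odd is Shi et al.\ \cite{RefJ (2017) X.Shi} Theorem~4.4, $h$ even is Theorem~\ref{th2}; combining them the corollary holds,'' with no propagation rule invoked. So the core of your proposal matches the paper verbatim (and your opening reference to Lemma~\ref{lem propagation rule} is, as you yourself conclude, unnecessary).

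The one place you diverge is the $r=1$ edge case, and there your proposed fix is wrong: Theorem~\ref{th5} requires $h\mid(q-1)$, not $h\mid(q+1)$, so it is inapplicable; and $n=\frac{q^2-1}{h}=\frac{q+1}{h}(q-1)$ exceeds $q+1$ whenever $h<q-1$, so the $n\le q+1$ constructions do not cover it either. The paper simply does not address this boundary (its Theorem~\ref{th2} is stated for $r=2t+3$ with $t\ge 0$, yet the corollary allows $r=1$). The honest repair is to observe that the \emph{proof} of Theorem~\ref{th2} goes through for $t=-1$: the matrix $A$ becomes $0\times 1$ (a vacuous system), and Lemma~\ref{lem zhenchu 2} with $t=-1$ shows that $m\nmid(qi+j+q+1-\frac{q+1}{h})$ for all $0\le i,j\le k-1$ when $k\le(s+1)\frac{q+1}{2s}-2$, so any $v_1\in\F_{q^2}^*$ already gives a Hermitian self-orthogonal GRS code.
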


In Corollary \ref{coro 1}, take $r=h-1$, we get the following result.

\begin{corollary}
Let $n=(h-1)\frac{(q^2-1)}{h}$, where $h\mid (q+1)$ and $h\geq 3$.
Then for any $1\leq k\leq q-1$,
there exists an $[[n,n-2k,k+1]]_q$-quantum MDS codes.
\end{corollary}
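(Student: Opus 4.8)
The plan is to read off the statement as the special case $r=h-1$ of Corollary \ref{coro 1}. First I would verify that this choice of $r$ satisfies the hypotheses of Corollary \ref{coro 1}: since $h\geq 3$ we have $1\leq h-1<h$, so $1\leq r<h$ holds; moreover $r+h=2h-1$ is odd, hence $2\nmid(r+h)$. With $r=h-1$ the length becomes $n=r\frac{q^2-1}{h}=(h-1)\frac{q^2-1}{h}$, exactly as in the claim.

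Next I would simplify the bound on $k$ provided by Corollary \ref{coro 1}. Substituting $r=h-1$ into $k\leq \bigl(\frac{r+h+1}{2}\bigr)\frac{q+1}{h}-2$ gives $\bigl(\frac{(h-1)+h+1}{2}\bigr)\frac{q+1}{h}-2=\frac{2h}{2}\cdot\frac{q+1}{h}-2=(q+1)-2=q-1$. Therefore Corollary \ref{coro 1} directly yields an $[[n,n-2k,k+1]]_q$ quantum MDS code for every $1\leq k\leq q-1$, which is precisely the assertion to be proved (note $q-1\geq 1$ since $q\geq 2$).

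There is no substantive obstacle here: the argument is a one-line specialization together with the elementary identity $\frac{2h}{2}\cdot\frac{q+1}{h}=q+1$. The only thing worth keeping in mind is that Corollary \ref{coro 1} itself encodes a parity split on $h$ --- it rests on Theorem \ref{th2} when $h$ is even and on the construction of Shi et al.\ when $h$ is odd --- but since that split is already built into its statement, no further case distinction is needed at this stage.
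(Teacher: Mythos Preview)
The proposal is correct and takes essentially the same approach as the paper, which simply states ``In Corollary~\ref{coro 1}, take $r=h-1$'' without further elaboration. Your added verifications of the hypotheses and the arithmetic simplification of the bound to $q-1$ are exactly the details the paper leaves implicit.
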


\begin{remark}(Comparison of Theorems \ref{th3}, \ref{th4} and \ref{th5})\label{remrak 3}
\begin{itemize}
\item[(1)] In [\cite{RefJ (2017)Lem GRS} Theorem 3.2], Zhang et al. proved that there exists an $[[n=r\frac{q^2-1}{h},n-2d,d]]$-quantum MDS code,
where $2\mid h$, $h\mid (q-1)$, $1\leq r\leq h$, and $2\leq d\leq (\frac{h}{2}+1)\frac{q+1}{h}+1$.
We can find that in Theorems \ref{th3} and \ref{th4}, the minimum distance can reach $(\frac{h}{2}+r)\frac{q-1}{h}+1$ and $\lfloor\frac{h+r}{2}\rfloor\frac{q-1}{h}+1$, respectively.
Therefore, our results in Theorems \ref{th3} and \ref{th4} can reach a larger minimum distance;
\item[(2)]In [\cite{RefJ (2019)W. some} Theorem 3.2] and [\cite{RefJ (2017) X.Shi} Theorem 4.12], Fang and Shi et al. proved that there exists an $[[n=r\frac{q^2-1}{h},n-2d,d]]$-quantum MDS code,
where $h\mid (q-1)$, $1\leq r\leq h$ and $2\leq d\leq r\frac{q-1}{h}$.
We can find that $r\frac{q-1}{h}< (\frac{h}{2}+r)\frac{q-1}{h}+1$, $r\frac{q-1}{h}< \lfloor\frac{h+r}{2}\rfloor\frac{q-1}{h}+1$ and $r\frac{q-1}{h}< r\frac{q-1}{h}+1$.
Therefore, our results in Theorems \ref{th3}, \ref{th4} and \ref{th5} can reach a larger minimum distance.
\end{itemize}
\end{remark}

It is worth noting that Theorems \ref{th3}, \ref{th4} and \ref{th5} improve and generalize many of the previous conclusions (see Table \ref{tab:4}).

\newcommand{\tabincell}[2]{\begin{tabular}{@{}#1@{}}#2\end{tabular}}
\begin{table}
\caption{The conclusions generalized by Theorems \ref{th3}, \ref{th4} and \ref{th5}.}
\label{tab:4}
\begin{center}
\resizebox{\textwidth}{38mm}{
	\begin{tabular}{cccc}
		\hline
		Class& Length $n$ & Minimum Distance $d$ & References\\
        \hline
        1& $n=\lambda(q+1)$, $q$ odd, $\lambda$ odd, $\lambda\mid (q-1)$  & $2\leq d\leq \frac{q+1}{2}+\lambda$ & \cite{RefJ (2014) kai} \\
        \hline
        2&  $n=2\lambda(q+1)$, $q \equiv 1({\rm mod}\ 4)$, $\lambda$ odd, $\lambda\mid (q-1)$ & $2\leq d\leq \frac{q+1}{2}+2\lambda$ & \cite{RefJ (2014) kai}\\
        \hline
        3&  $n=2^fs(q+1)$, $2^e\parallel (q-1)$, $0\leq f<e$, $s\mid (q-1)$, $s$ odd & $2\leq d\leq \frac{q+1}{2}+2^fs$ & \cite{RefJ (2015) B.chen}\\
        \hline
       4& $n=\frac{q^2-1}{m}$, $m\mid (q-1)$, $m$ even & $2\leq d\leq \frac{q+1}{2}+\frac{q-1}{m}$ & \cite{RefJ (2016) X.He}\\
        \hline
        5&  $n=\frac{q-1}{2k+1}(q+1)$, $(2k+1)\mid (q-1)$, $(4k+1)\mid (q+1)$, $q$ odd & $2\leq d\leq \frac{q-1}{2}+\frac{q+1}{2(4k+1)}$ & \cite{RefJ (2016) X.He}\\
        \hline
         6&$n=(m_1+m_2-1)\frac{q^2-1}{2m_1m_2}$, odd $m_1<m_2$, $\gcd(m_1,m_2)=1$, $2m_1m_2= q-1$ & $2\leq d\leq \frac{q+1}{2}+m_1$ & \cite{RefJ (2016) X.He}\\
        \hline
        7&$n=\frac{q^2-1}{m_1}+\frac{q^2-1}{m_2}-q-1$, $2\mid m_1\mid (q-1)$, $2\mid m_2\mid (q-1)$, $lcm(m_1,m_2)=q-1$  & $2\leq d\leq \frac{q+1}{2}+\min\{\frac{q-1}{m_1},\frac{q-1}{m_2}\}$ & \cite{RefJ (2016) X.He}\\
        \hline
        8& $n=bm(q+1)$, $2m\mid (q-1)$, $bm\leq q-1$ & $2\leq d\leq \frac{q+1}{2}+m$ & \cite{RefJ (2017)Lem GRS}\\
        \hline
         9& $n=(bm+c(m-1))(q+1)$, $2m\mid (q-1)$, $b,c\geq 0$, $(b+c)m\leq q-1$ and $b\geq 1$ or $m\geq 2$  & $2\leq d\leq \frac{q+1}{2}+m$ & \cite{RefJ (2017)Lem GRS}\\
        \hline
        10 &$n=c(q+1)$, $q=2am+1$, $\gcd(a,m)=1$, $1\leq c\leq a+m-1\leq am$ & $2\leq d\leq \frac{q+1}{2}+c$ & \cite{RefJ (2017)Lem GRS}\\
        \hline
        11 &$n=c(q+1)$, $q=2am+1$, $\gcd(a,m)=1$, $a+m\leq c\leq 2(a+m-1)$ & $2\leq d\leq \frac{q+1}{2}+\lfloor \frac{c}{2}\rfloor$ & \cite{RefJ (2017)Lem GRS}\\
        \hline
        12 &$n=(t+1)\frac{q^2-1}{h}$, $q-1=mh$, $h>1$, $m>1$, $1\leq t\leq h-1$ & $2\leq d\leq (t+1)\frac{q-1}{h}$& \cite{RefJ (2017) X.Shi}\\
        \hline
        13 &$n=(q-\delta-1)(q+1)$, $q$ even, $0\leq \delta\leq q-3$ & $2\leq d\leq q-\delta-1$& \cite{RefJ (2018) X.Shi CC}\\
        \hline
        14 &$n=(q-1-2\delta)(q+1)$, $q\equiv 3($mod $4)$, $q>3$, $0\leq \delta\leq \frac{q-5}{2}$ & $2\leq d\leq q-2-2\delta$& \cite{RefJ (2018) X.Shi CC}\\
        \hline
        15 &$n=s(q+1)$, $1\leq s\leq q-1$ & $2\leq d\leq s$& \cite{RefJ (2021) application}\\
        \hline
	\end{tabular}}
\end{center}
\end{table}

\begin{table}
\caption{Some known results of $[[n,n-2d+2,d]]_q$-quantum MDS codes}
\label{tab:1}
\begin{center}
\resizebox{\textwidth}{65mm}{
	\begin{tabular}{cccc}
		\hline
		Class  & Length $n$ & Minimum Distance $d$ & References\\
		\hline
		1  &  $n\leq q+1$ & $2\leq d\leq \frac{n}{2}+1$ &  \cite{RefJ (2004) n<q+1(1),RefJ (2004) n<q+1(2)} \\
        \hline
        2 & $n=q^2+1$ & $2\leq d\leq q+1$, $d\neq q$ & \cite{RefJ (2011) n=q^2+1(1),RefJ (2015) n=q^2+1(2),RefJ (2013) n=q^2+1(3),RefJ (2008) n=q^2+1(4),RefJ (2021) n=q^2+1(5)} \\
        \hline
        3 & $n=r(q+1)+2$, $1\leq r\leq q-1$ & $2\leq d\leq r+2$, $(p,r,d)\neq(2,q-1,q)$& \cite{RefJ (2018)W. two} \\
        \hline
        4 &  $n=r\frac{q^2-1}{h}+1$, $h|(q-1)$, $1\leq r\leq h$ & $2\leq d\leq r\frac{q-1}{h}+1$ & \cite{RefJ (2019)W. some} \\
        \hline
        5 & $n=r\frac{q^2-1}{h}$, $h|(q-1)$, $1\leq r\leq h$ & $2\leq d\leq r\frac{q-1}{h}$ & \cite{RefJ (2017) X.Shi,RefJ (2018) X.Shi CC,RefJ (2019)W. some} \\
        \hline
        6 &$n=r\frac{q^2-1}{h}$, $2\mid h$, $h|(q-1)$, $1\leq r\leq h$ & $2\leq d\leq (\frac{h}{2}+1)\frac{q-1}{h}+1$ & \cite{RefJ (2014) kai,RefJ (2015) B.chen,RefJ (2016) X.He,RefJ (2017)Lem GRS} \\
        \hline
        7 & $n=\frac{q^2-1}{2}$, $q$ odd & $2\leq d\leq q$& \cite{RefJ (2014) kai} \\
        \hline
        8 & $n=q^2$ & $2\leq d\leq q$& \cite{RefJ (2014) L.jin,RefJ (2008) n=q^2+1(4)} \\
        \hline
        9 & $n=tq$, $1\leq t\leq q$ & $2\leq d\leq \lfloor  \frac{tq+q-1}{q+1} \rfloor+1$& \cite{RefJ (2008) n=q^2+1(4),RefJ (2018)W. two} \\
        \hline
        10 & $n=r\frac{q^2-1}{h}+1$, $h|(q+1)$, $1\leq r\leq h$ & $2\leq d\leq (\lfloor \frac{h}{2}\rfloor+1)\frac{q+1}{h}$ & \cite{RefJ (2016) X.He,RefJ (2017) L.Jin,RefJ (2019)W. some} \\
        \hline
        11 &$n=r\frac{q^2-1}{h}+1$, $h|(q+1)$, $1\leq r\leq h-2$, $2|(h+r)$ & $2\leq d\leq (\frac{h+r}{2})\frac{q+1}{h}$ & \cite{RefJ (2014) L.jin,RefJ (2019)W. some}  \\
        \hline
        12 & $n=r\frac{q^2-1}{h}$, $h|(q+1)$, $1\leq r< h$, $2\nmid (h+r)$, $2\nmid h$ &  $2\leq d\leq (\frac{h+r+1}{2})\frac{q+1}{h}-1$& \cite{RefJ (2015) L.W,RefJ (2015) B.chen,RefJ (2017) X.Shi,RefJ (2017) L.Jin}\\
        \hline
        13 & $n=r\frac{q^2-1}{h}$, $h|(q+1)$, $1\leq r< h$, $2\nmid (h+r)$, $2\mid h$ &  $2\leq d\leq (\frac{h+r-1}{2})\frac{q+1}{h}-1$ & \cite{RefJ (2016) X.He,RefJ (2019)W. some}\\
        \hline
        14 &$n=r\frac{q^2-1}{h}$, $h|(q+1)$, $1\leq r\leq h-2$, $2|(h+r)$ & $2\leq d\leq (\frac{h+r}{2})\frac{q+1}{h}-1$ & \cite{RefJ (2017) X.Shi,RefJ (2019)W. some} \\
        \hline
        15 &$n=t(q-1)$, $1\leq t\leq q-1$ & $2\leq d\leq \lfloor \frac{tq-1}{q+1}\rfloor+1$ & \cite{RefJ (2021) application}  \\
        \hline
        16 & $n=\frac{q^2+1}{2}$, $q$ odd &  $2\leq d\leq q$, $d$ odd& \cite{RefJ (2014) L.jin,RefJ (2013) n=q^2+1(3)} \\
        \hline
        17 &$n=\frac{q^2+1}{5}$, $q\equiv \pm 3({\rm mod}\ 10)$ & $2\leq d\leq \frac{3q\pm 1}{5}$, $d$ even & \cite{RefJ (2014) kai,RefJ (2016) L.Hu,RefJ (2015) T.Zhang} \\
        \hline
        18 &$n=\frac{q^2+1}{5}$,  $q\equiv \pm 2({\rm mod}\ 10)$ & $2\leq d\leq \frac{3q\mp 1}{5}$, $d$ odd & \cite{RefJ (2016) S.Li} \\
        \hline
        19 & \tabincell{c}{$n=r\frac{q^2-1}{s}+l\frac{q^2-1}{t}-rl\frac{q^2-1}{st}$, odd $s\mid (q+1)$,\\ even $t\mid (q-1)$, $r\leq s-1$, $l\leq t$, $rl\frac{q^2-1}{st}< q-1$ }
        & $2\leq d\leq \min\{\lfloor \frac{s+r}{2}\rfloor\frac{q+1}{s}-1, (\frac{t}{2}+1)\frac{q-1}{t}+1\}$ & \cite{RefJ (2020) X.Fang} \\
        \hline
        20 & \tabincell{c}{$n=r\frac{q^2-1}{s}+l\frac{q^2-1}{t}-rl\frac{q^2-1}{st}+1$, odd $s\mid (q+1)$,\\ even $t\mid (q-1)$, odd $r\leq s-1$, $l\leq t$, $rl\frac{q^2-1}{st}< q-1$ }
        & $2\leq d\leq \min\{ \frac{s+r}{2}\frac{q+1}{s}, (\frac{t}{2}+1)\frac{q-1}{t}+1\}$ & \cite{RefJ (2020) X.Fang} \\
        \hline
        21 & \tabincell{c}{$n=r\frac{q^2-1}{s}+l\frac{q^2-1}{t}$, even $s\mid (q+1)$,\\ even $t\mid (q-1)$, $r\leq \frac{s}{2}$, $l\leq \frac{t}{2}$}
        & $2\leq d\leq \min\{ \lfloor \frac{s+r}{2}\rfloor\frac{q+1}{s}-1, (\frac{t}{2}+1)\frac{q-1}{t}+1\}$ & \cite{RefJ (2020) X.Fang} \\
        \hline
	\end{tabular}}
 \begin{tablenotes}
     \footnotesize
    \item Note: in \cite{RefJ (2016) X.He}, \cite{RefJ (2017)Lem GRS}, \cite{RefJ (2019) F.Tian} and \cite{RefJ (2021) Ball determine} many quantum MDS codes were also introduced.
    \end{tablenotes}
\end{center}
\end{table}

\begin{table}
\caption{Our new constructions of $[[n,n-2d+2,d]]_q$-quantum MDS codes}
\label{tab:3}
\begin{center}
\resizebox{\textwidth}{25mm}{
	\begin{tabular}{cccc}
		\hline
		Forms of $n$  & Length $n$ & Minimum Distance $d$ & References\\
        \hline
        $(q-1)\mid (n-1)$& \tabincell{c}{$n=r\frac{q^2-1}{h}+1$, $h\mid (q+1)$,\\ $1< r< \min\{q,h\}$, $2\nmid (r+h)$} & $2\leq d\leq (\frac{r+h-1}{2})\frac{q+1}{h}$ & Theorem \ref{th1} \\
        \hline
        $(q-1)\mid n$ & \tabincell{c}{$n=r\frac{q^2-1}{h}$, $h\mid (q+1)$,\\ $1\leq r\leq h$, $2\nmid (r+h)$, $2\mid h$} & $2\leq d\leq (\frac{r+h+1}{2})\frac{q+1}{h}-1$ & Theorem \ref{th2}\\
        \hline
        $(q+1)\mid n$ & $n=r\frac{q^2-1}{h}$, $2\mid h$, $h\mid (q-1)$, $1\leq r\leq \frac{h}{2}$ & $2\leq d\leq (\frac{h}{2}+r)\frac{q-1}{h}+1$ & Theorem \ref{th3}\\
        \hline
        $(q+1)\mid n$ &  $n=r\frac{q^2-1}{h}$, $2\mid h$, $h\mid (q-1)$, $\frac{h}{2}<r< h$ & $2\leq d\leq \lfloor\frac{h+r}{2}\rfloor\frac{q-1}{h}+1$ & Theorem \ref{th4}\\
        \hline
        $(q+1)\mid n$ &  $n=r\frac{q^2-1}{h}$, $2\nmid h$, $h\mid (q-1)$, $1<r< h$ & $2\leq d\leq r\frac{q-1}{h}+1$ & Theorem \ref{th5}\\
        \hline
	\end{tabular}}
\end{center}
\end{table}

\section{Conclusions}\label{sec6}

In this paper,
we construct five new classes of $q$-ary quantum MDS codes via Hermitian self-orthogonal GRS codes
(see Theorems \ref{th1}, \ref{th2}, \ref{th3}, \ref{th4}, \ref{th5}).
It turns out that the quantum MDS codes we have constructed are new, since the parameters of these codes cannot be derived from previous conclusions
(see Remarks \ref{remrak 1}, \ref{remrak 2}, \ref{remrak 3}).
The minimum distance of all the $q$-ary quantum MDS codes constructed in this paper can be larger than $q/2+1$.

\section*{Acknowledgments}

This research was supported by the National Natural Science Foundation of China (No.U21A20428 and 12171134).

\section*{References}

\end{document}